\newcommand{\nextitem}{\par\hspace*{\labelsep}\textbullet\hspace*{\labelsep}}
\let\doendproof\endproof
\renewcommand\endproof{~\hfill\qed\doendproof}
\begin{document}
\title{Evaluation of DNF Formulas}
\author{Sarah R.~Allen\inst{1, 2} \and Lisa Hellerstein\inst{2} \and Devorah Kletenik\inst{2}\and Tongu\c{c} {\"U}nl{\"u}yurt\inst{3}}
\institute{Carnegie Mellon University \and Polytechnic Institute of NYU \and Sabanci University}
\date{\parbox{\linewidth}{\centering%
  \today\endgraf\bigskip
  DRAFT: \endgraf \medskip Please do not distribute}}
\maketitle

\section{Introduction}
Stochastic Boolean Function Evaluation (SBFE) is the problem of determining
the value of a given Boolean function $f$ on an unknown input $x$,
when each bit of $x_i$ of $x$ can only be determined by paying a
given associated cost $c_i$.  Further, $x$ is drawn from a
given product distribution: for each $x_i$, $Prob[x_i=1] = p_i$,
and the bits are independent.
The goal is to minimize the expected
cost of evaluation.  This problem has been studied in the Operations Research
literature, where it is known as ``sequential testing'' of Boolean functions (cf. ~\cite{unluyurtReview}).
It has been studied in learning theory in the context of
learning with attribute costs~\cite{kaplanMansour-Stoc05}.

In this paper, we study the complexity
of the SBFE problem for classes of DNF formulas.
We consider both exact and approximate versions of the problem
for subclasses of DNF, for arbitrary costs and product distributions,
and for unit costs and/or the uniform distribution.
Because of the \textsf{NP}-hardness of satisfiability, the general SBFE problem is
easily shown to be \textsf{NP}-hard for arbitrary DNF formulas
~\cite{Greiner06}.

We consider the SBFE problem for monotone $k$-DNF and $k$-term DNF
formulas.
We use a simple reduction to show that the SBFE problem for $k$-DNF
is \textsf{NP}-hard, even for $k=2$.
We present an algorithm for evaluating monotone $k$-DNF that
achieves a solution that is within a factor of
$\frac{4}{\rho^k}$ of optimal, where $\rho$ is either the minimum $p_i$ value, or the
minimum $1-p_i$ value, whichever is smaller.
We present an
algorithm for evaluating monotone $k$-term DNF with
an approximation factor of $\max\{2k, \frac{2}{\rho}(1 + \ln k)\}.$
We also prove that the SBFE problem for monotone $k$-term DNF
can be solved exactly in polynomial time for constant $k$.

Previously, Kaplan et al. gave an approximation algorithm
solving the SBFE problem for CDNF formulas (and decision trees)
for the special case of
unit costs, the uniform distribution, and monotone CDNF formulas
~\cite{kaplanMansour-Stoc05}.
CDNF formulas are formulas
consisting of a DNF formula together with an equivalent CNF formula,
so the size of the input depends both on the size of the
CNF and the size of the DNF. Having both formulas makes the evaluation problem easier.
They showed that their algorithm achieves
a solution whose cost is within an $O(\log kd)$ factor of
the expected certificate cost, where $k$ is the number
of terms of the DNF, and $d$ is the number of clauses.
The expected certificate cost is a lower bound on the cost
of the optimal solution.
Deshpande et al. subsequently
gave an algorithm solving the unrestricted
SBFE problem for CDNF formulas,
whose solution is within a factor of $O(\log kd)$
of optimal, for arbitrary costs, arbitrary
probabilities, and without the monotonicity assumption~\cite{dhkarxiv}.
Thus the Deshpande et al. result
solves a more general problem than that of Kaplan et al., but their
approximation bound is weaker because it is not in terms
of expected certificate cost.

The Kaplan et al. algorithm
uses a round robin technique that alternates between two processes, one of which attempts to achieve a 0-certificate and one which attempts to achieve a 1-certificate. The technique requires unit costs. We show how to modify the technique to
handle non-unit costs, with no change in the approximation bound.
The algorithm can also be trivially extended to
remove the uniform distribution restriction,
changing the approximation bound to $O(\frac{1}{\rho} \log kd)$.

We do not know how to
remove the assumption of Kaplan et al. that the CDNF formula
is monotone, while still achieving an approximation factor that is
within $O(\log kd)$ of the expected certificate cost.
We do show, however, that this approximation factor
is close to optimal, even for the special case they considered.
We prove that, with respect to the expected certificate cost,
the approximation factor must be at least
$\Omega((\log kd)^{\epsilon})$,
for any constant $\epsilon$ where $0 < \epsilon < 1$.

This proof also implies that
the (optimal)
average depth of a decision tree computing
a Boolean function can be exponentially
larger than the average certificate size
for that function
(i.e., the average of the minimum-size certificates
for all $2^n$ assignments).
In contrast, the depth complexity of a decision tree for a function,
(a worst-case measure) is at most
quadratic in its certificate complexity
(cf. ~\cite{BuhrmanWolf}).

\section{Stochastic Boolean Function Evaluation}
The formal definition of the Stochastic Boolean Function Evaluation (SBFE) problem is
as follows.
The input is a representation of a Boolean function $f(x_1, \ldots, x_n)$ from a fixed class of representations $C$,
a probability vector $p = (p_1, \ldots, p_n)$, where $0 < p_i < 1$,
and a real-valued cost vector $(c_1, \ldots, c_n)$, where $c_i \geq 0$.
An algorithm for this problem must
compute and output the value of $f$ on an $x \in \{0,1\}^n$, drawn
randomly from the product distribution $D_p$,
i.e., the distribution where $p_i = Prob[x_i = 1]$ and the $x_i$ are independent.
However, the algorithm is not given direct access to $x$.  Instead, it can
discover the value of any $x_i$ only by ``testing'' it, at a cost of $c_i$.
The algorithm must perform the tests sequentially,
each time choosing the next test to perform.
The algorithm can be adaptive, so
the choice of the next test can depend on the
outcomes of the previous tests.
The expected cost of the algorithm is the cost it incurs on a random $x$ from $D_p$.
(Note that since each $p_i$ is strictly between 0 and 1, the algorithm must
continue doing tests until it has obtained a 0-certificate or 1-certificate
for the function.)
The algorithm is optimal if it
has the minimum possible expected cost with respect to $D_p$.

We consider the running time of the algorithm to be the (worst-case) time it takes to
determine the single next variable to be tested, or to compute the value of $f(x)$ after
the last test result is received.
The algorithm corresponds to a Boolean decision tree (testing strategy) computing $f$, indicating
the adaptive sequence of tests.

SBFE problems arise in many
different application areas.
For example, in medical diagnosis,
the $x_i$ might
correspond to medical tests performed on a given patient,
where $f(x) = 1$ if the patient should be diagnosed as having a particular disease.
In query optimization in databases, $f$ could correspond to a Boolean query, on predicates corresponding to $x_1, \ldots, x_n$,
that has to be evaluated for every tuple in the database in order to find tuples satisfying the query~\cite{Ibaraki1984,KrishnamurthyBZ86,conf/icde/DeshpandeH08,SrivastavaMWM06}.

There are polynomial-time algorithms solving the SBFE problem exactly
for a small number of classes of Boolean formulas, including
read-once DNF formulas and $k$-of-$n$ formulas (see~\cite{unluyurtReview} for a survey
of exact algorithms).
There is a naive approximation algorithm for evaluating any function under any distribution that achieves an approximation factor of $n$:
Simply test the variables in increasing order of their costs.
This follows easily from the fact that the cost incurred
by the naive algorithm in evaluating function $f$
on an input $x$ is at most $n$ times the cost of the min-cost certificate
for $f$,  contained in $x$  (cf.~\cite{kaplanMansour-Stoc05}).

Deshpande et al. explored a generic approach to
developing approximation algorithms for SBFE problems, called the $Q$-value
approach. It involves
reducing the problem to an instance of Stochastic Submodular Set Cover
and then solving it using the Adaptive Greedy algorithm of Golovin and Krause~\cite{golovinKrause}.
They proved that the $Q$-value approach does not yield a sublinear approximation
bound for evaluating $k$-DNF formulas, even for $k = 2$.
They also developed a new algorithm for solving Stochastic
Submodular Set Cover, called Adaptive Dual Greedy,
and used it to obtain a 3-approximation algorithm solving
the SBFE problem for linear threshold formulas~\cite{dhkarxiv}.

Table~\ref{tab:results} summarizes work on the SBFE problem for classes of
DNF formulas, and for monotone versions of those classes.
The table includes both previous results and the results in this paper.
\begin{savenotes}
\begin{table}
\centering
\caption{Complexity of the SBFE Problem for DNF Formulas}
\label{tab:results}

\begin{tabular}{p{3cm}p{5cm}p{7.5cm} }
\toprule
DNF formula
& general case & monotone case\\
\midrule
\addlinespace[-.5em]
read-once DNF &
\parbox{4cm}{\begin{itemize}\item[$\bullet$]$O(n \ln n)$-time \\ algorithm~\cite{kaplanMansour-Stoc05,Greiner06}\end{itemize}}
 &
\nextitem $O(n \ln n)$-time algorithm~\cite{kaplanMansour-Stoc05,Greiner06} \\
\addlinespace[-.5em]
\midrule
\addlinespace[-.5em]
$k$-DNF & \parbox{4.5cm}
{ \begin{itemize}\item[$\bullet$] inapproximable even \\ under \emph{ud} (\S~\ref{sec:hardnessmonotone}) \end{itemize}
 } &
\parbox{7.5cm}{
\nextitem \textsf{NP}-hard, even with \emph{uc}~(\S~\ref{sec:hardnessmonotone})
\nextitem poly-time $(\frac{4}{\rho^k})$-approximation algorithm (\S~\ref{sec:monotonek})
} \\ 
\addlinespace[-.5em]
\midrule
\addlinespace[-.5em]

$k$-term DNF & \parbox{4.5cm}{\begin{itemize}[$\bullet$]\item polynomial-time \mbox{$O(k \log n)$-approximation \cite{dhkarxiv}\footnote{This follows from the fact that any DNF formula with at most $k$ terms can be expressed as a CNF formula with at most $n^k$ clauses.} }\end{itemize}}
&\parbox{7.5cm}{\begin{itemize}[$\bullet$]
\item $O(n^{2^k})$-time algorithm for general case (\S~\ref{sec:exactmonotonekterm})
\item $O(2^{2^k})$-time algorithm for \emph{uc/ud} case (\S~\ref{sec:exactmonotonekterm})
\item  polynomial-time \\
\mbox{$\max\{2k, \frac{2}{\rho}(1+\ln k)\}$-approximation}
 (\S~\ref{sec:monotonekterm})
 \end{itemize}
}
\\
\addlinespace[-.5em]
\midrule
\addlinespace[-.5em]
CDNF & \parbox{4.5cm}{\begin{itemize}[$\bullet$] \item polynomial-time \mbox{$O(\log(kd))$-approximation } (wrt~E[OPT])~\cite{dhkarxiv}\end{itemize}} &
\parbox{7.5cm}{\begin{itemize}[$\bullet$]
\item No known polynomial-time exact algorithm or \mbox{\textsf{NP}-hardness} proof
\item poly-time $O(\log(kd))$-approx. for \emph{uc} and \emph{ud} ~\cite{kaplanMansour-Stoc05}
\item poly-time $O(\log(kd))$-approx. (wrt~E[OPT])\cite{dhkarxiv} \end{itemize}} \\ 
\addlinespace[-.5em]
\midrule
\addlinespace[-.5em]
general DNF &
\parbox{4.5cm}{\begin{itemize}[$\bullet$]
\item inapproximable even under \emph{ud} (\S~\ref{sec:hardnessmonotone}) \end{itemize}} & \parbox{7.5cm}{
\begin{itemize}[$\bullet$]
\item \textsf{NP}-hard, even with \emph{uc} (\S~\ref{sec:hardnessmonotone})\item inapproximable within a factor of \\ $c \ln{n}$ for a constant $c$ (\S~\ref{sec:hardnessmonotone})
\end{itemize}}\\
\addlinespace[-.5em]
\bottomrule
\addlinespace[1em]

\end{tabular}

{\parbox{16cm}{The abbreviations \emph{uc} and \emph{ud} are used to refer to unit costs and uniform distribution, respectively. $k$ refers to the number of terms in the DNF, $d$ refers to the number of clauses in the CNF. $\rho$ is the minimum value of any $p_i$ or $1-p_i$. Citations of results from this paper are enclosed in parentheses and include the section number.
All
approximation factors are with respect to E[CERT], the expected certificate cost, except for the CDNF bound of~\cite{dhkarxiv}. That bound is with respect to
E[OPT], the expected cost of the optimal strategy, which is lower bounded by E[CERT].}
}
\end{table}

\section{Preliminaries}
\subsection{Definitions}

A \emph{literal} is a variable or its negation.
A \emph{term} is a possibly empty conjunction ($\wedge$) of literals.
If the term is empty, all assignments satisfy it. A clause is a possibly empty disjunction ($\vee$) of literals. If the clause is empty, no assignments satisfy it. The {\em size} of a term or clause is the number of literals in it.
%

A \emph{DNF} (disjunctive normal form) formula is either the
constant 0, the constant 1, or
a formula of the form
$t_1 \vee \dots \vee t_k$, where $k \geq 1$ and each $t_i$ is a term. Likewise, a \emph{CNF} (conjunctive normal form) formula is either the constant 0, the constant 1, or a formula of the form $c_1 \wedge \dots \wedge c_k$, where each $c_i$ is a clause.

A $k$-term DNF is a DNF formula consisting of at most $k$ terms.
A $k$-DNF is a DNF formula where each term has size at most $k$.
The {\em size} of a DNF (CNF) formula is the number of its terms (clauses); if it is
the constant 0 or 1, its size is 1.
A DNF formula is \emph{monotone} if it contains no negations. A \emph{read-once} DNF formula is a DNF formula where each variable appears at most once.


Given a Boolean function $f:\{0,1\}^n \rightarrow \{0,1\}$,
a partial assignment $b \in \{0,1,*\}^n$
is a \emph{0-certificate (1-certificate)} of $f$ if
 $f(a) = 0$ ($f(a)=1$) for all $a$ such that $a_i = b_i$ for all $b_i \neq *$.
It is a certificate for $f$ if it is either a 0-certificate or a 1-certificate.
Given a cost vector $c = (c_1, \ldots, c_n)$,
the cost of a certificate $b$ is $\sum_{j:b_j \neq *} c_j$.
We say that input $x$ {\em contains} certificate $b$ if
$x_i = b_i$ for all $i \neq *$.
The \emph{variables in a certificate} $b$
are the $x_i$ such that $b_i \neq *$.
If $x$ contains $b$ and $S$ is a superset of the variables in $b$,
then we say that $S$ contains $b$.

The expected certificate cost of a function $f$,
with respect to cost vector $c$ and probability vector $p$,
is $E[CERT(f,x)]$, where the expectation is with respect to
$x$ drawn from product distribution $D_p$, and $CERT(f,x)$
is the minimum cost of a certificate $b$ of $f$ contained in $x$.

Given a Boolean function $f$,
let $E_f[OPT]$ denote the
minimum expected cost of any algorithm solving the SBFE
for $f$, in the unit-cost, uniform distribution case.
Let $E_f[CERT]$ denote the expected certificate cost,
in the unit cost, uniform distribution case.


The {\em set covering problem} is as follows: Given a ground set $A = \{e_1, \ldots, e_m\}$ of
elements, a set ${\cal S} = \{S_1, \ldots, S_n\}$ of subsets of $A$, and a positive integer $k$,
does there exist ${\cal S}' \subseteq {\cal S}$ such that $\bigcup_{S_i \in {\cal S}'} = {\cal S}$ and $|{\cal S}'| \leq k$?
Each set $S_i \in {\cal S}$ is said to cover the elements it contains.  Thus the set covering
problem asks whether
$A$ has a ``cover'' of size at most $k$.


\end{savenotes}
\section{Hardness of the SBFE problem for monotone DNF}
\label{sec:hardnessmonotone}

Before presenting approximation algorithms solving the SBFE problem
for classes of monotone DNF, we begin by discussing the hardness
of the exact problem.

Greiner et al.~\cite{Greiner06} showed that the SBFE problem for
CNF formulas is \textsf{NP}-hard, as follows.
If a CNF formula is unsatisfiable, then no tests
are necessary to determine its value
on an assignment $x$.
If there were a polynomial-time algorithm solving the SBFE problem
for CNF formulas, we could use it to solve SAT:
given CNF Formula $\phi$, we could  run the SBFE algorithm
on $\phi$ (with arbitrary $p$ and $c$),
and just observe whether the algorithm begins
by choosing a variable to test, or
whether it immediately outputs 0 as the value of the formula.
Thus the SBFE problem on CNF formulas is \textsf{NP}-hard, and
by duality, the same is true for DNF formulas.

Moreover, if $\textsf{P} \neq \textsf{NP}$, we cannot approximate
the SBFE problem for DNF within any factor $\rho > 1$.
If a $\rho$-approximation algorithm existed, then on a tautological DNF $\phi$,
the algorithm would have to immediately output 1 as the value
of $\phi$, because $\rho \times 0 = 0$.  On non-tautological $\phi$,
the algorithm would instead have to specify a variable to test.

The SBFE problem for DNF is still
$\textsf{NP}$-hard even when the DNF is monotone.
To show this, we use an approach used by
Cox~\cite{heuristicLeastCostCox} in proving $\textsf{NP}$-hardness of linear threshold evaluation.
Intuitively, in an instance of SBFE with unit costs
if the probabilities $p_i$ are very close to 0 (or 1), then
the expected cost of evaluation is dominated
by the cost of evaluating the given function $f$ on a specific input $x^*$.
That cost is minimized by testing
only the variables in a minimum-cost certificate for $f$ on $x^*$.
The idea, then, is to show hardness of the SBFE problem for a class of formulas $C$
by reducing an $\textsf{NP}$-hard problem to the problem of finding,
given $f \in C$ and a particular input $x^*$,
a smallest size certificate of $f$ contained in $x^*$.
Cox reduced from Knapsack, and here we reduce from Vertex-Cover.
The following lemma is implicit in the proof of Lemma 1 of Cox:
\begin{lemma} \label{lem:allzerosoptimal}
 Let $T$ be a Boolean decision tree computing Boolean function $f(x_1, \ldots, x_n)$.
For $i \in \{1,\ldots, n\}$, let $q_i = q = \max\{ \frac{n^2-.5}{n^2}, (\frac{d+.5}{d+1})^{1/n}\}$
and let $p_i = 1-q_i$.
Let $p = (p_1, \ldots, p_n)$, let $c = (1, \ldots, 1)$ be the vector of unit costs,
and let $0^n$ denote the all 0's assignment.
If with respect to $D_p$ and $c$,
$T$ has minimum expected evaluation cost over all decision trees computing $\phi$,
then the variables tested along the path
corresponding to $0^n$ in $T$ are precisely those set to $0$ in a min-cost certificate
for $f$ contained in $0^n$.
\end{lemma}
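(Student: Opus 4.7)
The strategy is an exchange argument that exploits the fact that $q$ is taken extremely close to $1$, so the expected evaluation cost is dominated by what happens on the single input $0^n$. Write $s$ for the minimum cost (equivalently, size, since costs are unit) of a certificate of $f$ contained in $0^n$, and let $k$ denote the number of variables tested along the $0^n$ path of the purportedly optimal tree $T$. I would first observe that $k \geq s$: the set of variables tested on the $0^n$ path, fixed to $0$, must itself be a certificate, because any input consistent with those fixations reaches the same leaf of $T$ and thus must have the same value of $f$. The target is then to rule out $k > s$.

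For that, I would build a competitor $T'$. Fix any min-cost certificate $b \in \{0,*\}^n$ for $f$ contained in $0^n$; it has $s$ starred-out complement variables. Let $T'$ begin by probing the $s$ variables of $b$ in some order; if all come back $0$ it outputs the common $f$-value of everything containing $b$, and otherwise (as soon as one probe returns $1$) it hands off to any correct decision tree computing the residual function on the remaining variables. This $T'$ correctly computes $f$ and its path for $0^n$ has length exactly $s$.

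Now I would compare expected costs. Since $\Pr[x=0^n] = q^n$ and the cost of $T$ on $0^n$ is exactly $k$,
\[ E[\mathrm{cost}(T)] \geq q^n k. \]
For $T'$, the cost on $0^n$ is $s$ and, crudely, any other input costs at most $n$, giving
\[ E[\mathrm{cost}(T')] \leq q^n s + (1-q^n) n. \]
Subtracting,
\[ E[\mathrm{cost}(T)] - E[\mathrm{cost}(T')] \geq q^n(k-s) - (1-q^n)n \geq q^n - n(1-q^n) = (n+1)q^n - n, \]
using $k - s \geq 1$ in the last step. It suffices to verify that the chosen $q$ makes this strictly positive, i.e., $q^n > n/(n+1)$. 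With $q \geq (n^2-.5)/n^2 = 1 - 1/(2n^2)$, Bernoulli's inequality gives $q^n \geq 1 - 1/(2n) = (2n-1)/(2n)$, and $(2n-1)(n+1) > 2n \cdot n$ for $n \geq 2$; the alternative branch $q \geq ((d+.5)/(d+1))^{1/n}$ is there to give an analogous strict margin when $d$ is the relevant parameter in the reductions to come. Either way $(n+1)q^n > n$, so $T'$ beats $T$, contradicting optimality. Hence $k = s$ and the variables on the $0^n$ path are precisely the support of a min-cost certificate for $f$ contained in $0^n$.

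The only genuinely delicate point is the last one: checking that the specific constants in $q$ leave a strict gap in the inequality $(n+1)q^n > n$, since an $n$-factor slack on non-$0^n$ inputs has to be beaten by a $q^n$-weighted saving on a single input. The rest of the argument is a routine swap/truncation construction and the automatic observation that any $0^n$ path gives rise to a certificate.
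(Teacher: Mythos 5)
Your argument is correct and is exactly the approach the paper intends: the paper does not prove this lemma itself but attributes it to Cox and sketches precisely your intuition (the $0^n$ input dominates the expectation, so an optimal tree must test a min-cost certificate on that branch), and your exchange argument with the competitor tree $T'$, together with the verification that $q \ge 1 - \tfrac{1}{2n^2}$ forces $(n+1)q^n > n$, fills in that sketch soundly. The only cosmetic remarks are that the $n=1$ case should be dispatched separately (trivially) and that, as you correctly observe, the $d$-branch of the $\max$ defining $q$ can only increase $q$ and so never hurts your inequality.
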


\begin{theorem} If $\textsf{P} \neq \textsf{NP}$, there is no polynomial time  algorithm solving the SBFE problem for monotone DNF.\ This holds even with unit costs, and even for $k$-DNF where $k \geq 2$. Also, if
$\textsf{P} \neq \textsf{NP}$,
the SBFE problem for monotone DNF, even with unit costs, cannot be approximated to within a factor of less than $c \ln n$, for some constant $c$.\end{theorem}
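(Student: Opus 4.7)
The plan is to apply Lemma~\ref{lem:allzerosoptimal} together with two reductions: from Vertex Cover (for the \textsf{NP}-hardness of monotone $k$-DNF with $k\ge 2$) and from Set Cover (for the $c\ln n$ inapproximability on general monotone DNF). The unifying observation is that for a nonconstant monotone DNF $\phi$, $\phi(0^n) = 0$, so every certificate contained in $0^n$ is a $0$-certificate, i.e., a set of variables whose assignment to $0$ falsifies every term. Such a minimum-cost certificate is exactly a minimum transversal (hitting set) of the collection of terms, and by Lemma~\ref{lem:allzerosoptimal} the optimal decision tree with the prescribed $p_i = 1-q$ tests precisely those variables along the $0^n$-path.

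For the \textsf{NP}-hardness claim, given a graph $G = (V,E)$ with $|V|=n$, I would take the monotone $2$-DNF $\phi_G = \bigvee_{\{i,j\}\in E} x_i x_j$. A $0$-certificate of $\phi_G$ contained in $0^n$ hits every edge-term and is therefore a vertex cover of $G$. A polynomial-time exact SBFE algorithm for $\phi_G$, with unit costs and the $p_i$ from Lemma~\ref{lem:allzerosoptimal}, would let us read off a minimum vertex cover from the $0^n$-path of its output tree, contradicting $\textsf{P} \ne \textsf{NP}$.

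For the inapproximability on general monotone DNF, given a Set Cover instance with universe $\{e_1, \dots, e_m\}$ and sets $S_1, \dots, S_n$, I take $\phi = \bigvee_{j=1}^m t_j$ with $t_j = \bigwedge_{i:\, e_j \in S_i} x_i$, so that a $0$-certificate contained in $0^n$ is precisely a set cover. Suppose an $\alpha$-approximate SBFE algorithm returns tree $T$; let $d$ denote the minimum set cover size and $\ell$ the length of the $0^n$-path in $T$ (itself a valid cover). The prescribed $q$ satisfies $q^n \ge 1 - 0.5/n$, so $E[OPT] \le q^n d + (1-q^n) n \le d + 0.5$, while $E[T] \ge q^n \ell$. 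Combined with $E[T] \le \alpha\, E[OPT]$, this yields $\ell \le \alpha(d+0.5)/q^n \le \alpha(d+1)$ for $n$ large. Hence $T$ supplies a set cover of size $\alpha(1+o(1))\,d$ whenever $d = \Omega(\log n)$. Composing with the Dinur--Steurer $(1-\epsilon)\ln m$ inapproximability of Set Cover, which applies to instances with $m = \mathrm{poly}(n)$, rules out any $\alpha < c\ln n$ for a suitable constant $c$.

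The main obstacle is the clean transfer of the approximation factor, since Lemma~\ref{lem:allzerosoptimal} controls only \emph{optimal} trees. The Markov-style step $\ell \le E[T]/q^n$ patches this, but relies on almost all mass under $D_p$ lying on $0^n$; the $(n^2-0.5)/n^2$ branch in the definition of $q$ is exactly what guarantees that the non-$0^n$ contribution to $E[OPT]$ is an additive $O(1)$. The remaining routine checks are verifying this quantitative slack bound and confirming that Set Cover remains $(1-\epsilon)\ln n$-hard in the parameter regime produced by the reduction.
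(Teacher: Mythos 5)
Your proposal takes essentially the same route as the paper: the identical Vertex Cover reduction via Lemma~\ref{lem:allzerosoptimal} for the \textsf{NP}-hardness of monotone $2$-DNF, and the identical generalization to Set Cover for the $c\ln n$ inapproximability. The paper leaves the transfer of the approximation factor implicit (it only cites the Set Cover hardness result), so your Markov-style step $\ell \le E[T]/q^n$ together with the additive-$O(1)$ slack from $q \ge (n^2-0.5)/n^2$ correctly fills in the one detail the paper glosses over.
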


\begin{proof}
Suppose there is a polynomial-time algorithm ALG for the SBFE problem for monotone 2-DNF,
with unit costs and arbitrary probabilities.
We show this algorithm could be used to solve the Vertex Cover problem:
Given a graph $G(V,E)$, find a minimum-size vertex cover for $G$,
i.e., a minimum-size set of vertices $V' \subseteq V$
such that for each edge $(v_j, v_k) \in E$, $\{v_j,v_k\} \bigcap V' \neq \emptyset$.

The reduction is as follows.
Given graph $G(V, E)$, construct a monotone 2-DNF formula $\phi$ whose variables $x_j$
correspond to the vertices $v_j \in V$,
and whose terms $x_j x_k$ correspond to the edges $e = (v_j, v_k)$ in $E$.
Consider the all 0's assignment $0^n$.
Since a 0-certificate for $\phi$
must set each term of $\phi$ to 0,
any min-cost certificate for $\phi$ contained in $0^n$
must also be a minimum-size vertex cover for $G$.
Thus by the previous lemma,
one can find a minimum-size vertex cover for $G$ by
using ALG to evaluate $\phi$ on input $0^n$,
with unit costs and the probabilities $p_i$
given in Lemma~\ref{lem:allzerosoptimal}, and observing which variables are tested.

A more general version of this reduction can be used to reduce
the general Set Cover problem to the SBFE problem for monotone DNF
(with terms of arbitrary length).  The non-approximability bound in the theorem
then follows from the $c \ln n$ inapproximability result for Set Cover~\cite{feige}.
\end{proof}

Given the difficulty of exactly solving the SBFE problem for monotone DNF formulas, we now consider approximation algorithms.

%

\section{Approximation algorithms for the evaluation of monotone $k$-DNF and $k$-term DNF} \label{sec:dnfeval}

\subsection{Monotone $k$-DNF formulas}
\label{sec:monotonek}

In this section, we will present a polynomial time algorithm for evaluating monotone $k$-DNF formulas. 
To evaluate $f$ we will alternate
between two algorithms, Alg0 and Alg1, each of which performs tests
on the variables $x_i$.
Alg0 tries to find a min-cost 0-certificate
for $f$, and Alg1 tries to find a min-cost 1-certificate for $f$.
As soon as one of these algorithms succeeds in finding a certificate,
we know the value of $f(x)$, and can output it.

This basic approach was used previously by
Kaplan et al.~\cite{kaplanMansour-Stoc05} in their algorithm
for evaluating monotone CDNF formulas in the unit cost, uniform distribution
case.  They used a standard greedy set-cover algorithm for both Alg0
and Alg1, with a strict round-robin policy that alternated between doing one
test of Alg0 and one test of Alg1.
Our algorithm uses a dual greedy set-cover algorithm for Alg0 and a
different, simple algorithm for Alg1.
The strict round-robin policy used by Kaplan et al. is only suitable
for unit costs, and our algorithm has to handle arbitrary costs.
Our algorithm uses a modified round-robin protocol instead.  We
begin by presenting that protocol.

Although we will use the protocol with a particular Alg0 and Alg1,
it works for any Alg0 and Alg1 that ``try'' to find 0-certificates
and 1-certificates respectively.
In the case of Alg0, this means that
Alg0 will succeed in outputing a 0-certificate of $f$
contained in $x$ if $f(x) = 0$, and
will eventually terminate and report failure otherwise.
Similarly, Alg1 will output a 1-certificate contained in $x$ if $f(x) = 1$, and will report failure otherwise.

The modified round-robin protocol works as follows.
It maintains two values: $K_0$ and $K_1$, where
$K_0$ is the cumulative cost of all tests performed
so far in Alg0, and $K_1$ is the cumulative cost of all tests
performed so far in Alg1.
At each step of the protocol, each of Alg0 and Alg1 independently determines a test to be performed next and the protocol chooses one of them.
(Initially, the two tests are the first tests of Alg0 and Alg1 respectively.)
Let $C_0$ and $C_1$ denote the respective costs of these tests.
Let ${x_j}_1$ denote the next test of Alg1 and let ${x_j}_0$ denote the next test of Alg0.
To choose which test to perform, the protocol uses the following
rule: {\em if
$K_0 + C_0 \leq K_1 + C_1$ it performs test ${x_j}_0$,
otherwise it performs test ${x_j}_1$.}

The result of the test is given to the algorithm to which it belongs,
and that algorithm continues until it either
(1) computes a new next test,
(2) terminates successfully and outputs a certificate,
or (3) terminates by reporting failure.  In the first case,
the protocol again chooses between the next test of Alg0 and Alg1,
using the rule above.
In the second, the protocol terminates because one of the
algorithms has output a certificate.
In the third,
the protocol runs the other algorithm (the one that did not terminate) until completion,
performing all of its remaining tests.
That algorithm is guaranteed to
output a
certificate, because if $x$ doesn't have a 0-certificate for $f$, it
must have a 1-certificate, and vice-versa.

Note that it would be possible for the above protocol to share information
between Alg0 and Alg1, so that if $x_i$ was tested by Alg0, Alg1 would not need
to retest $x_i$.  However, to simplify the analysis, we do not have the protocol
do such sharing.
We now show that the following invariant holds
holds at the end of each step of the protocol, provided that
neither Alg0 nor Alg1 terminated in that iteration.

\begin{lemma} \label{invariant}
At the end of each step of the above modified round-robin protocol,
if ${x_j}_1$ was tested in that step,
then $K_1 - {c_j}_0 \leq K_0 \leq K_1$. Otherwise,
if ${x_j}_0$ was tested, then $K_0 - {c_j}_1 \leq K_1 \leq K_0$
at the end of the step. \end{lemma}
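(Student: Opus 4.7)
The plan is induction on the step number $t \geq 1$. For the base case, before step $1$ we have $K_0 = K_1 = 0$, so the selection rule reduces to comparing the two initial pending test costs. Taking WLOG that Alg0 is chosen (so its pending cost is at most Alg1's), after step $1$ we have $K_0$ equal to that cost, $K_1 = 0$, and Alg1's pending cost unchanged. Both required bounds follow immediately---the lower bound from the selection rule, the upper bound from nonnegativity of costs.

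For the inductive step, I would assume the invariant at the end of step $t-1$ and consider step $t$. By symmetry between the two algorithms it suffices to handle the case in which step $t$ tests Alg0; write $c_{j_0}$ and $c_{j_1}$ for the pending test costs used by the protocol in step $t$'s selection rule. Then $K_0$ increases by $c_{j_0}$ while $K_1$ and $c_{j_1}$ are untouched, and the required lower bound $K_0 - c_{j_1} \leq K_1$ at the end of step $t$ is a direct rewriting of the selection rule $K_0 + c_{j_0} \leq K_1 + c_{j_1}$ that caused Alg0 to win this step.

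The required upper bound $K_1 \leq K_0$ at the end of step $t$ is where the inductive hypothesis enters, and I would split on which algorithm was tested in step $t-1$. If Alg1 was tested in step $t-1$, the hypothesis reads $K_1 - c_{j_0} \leq K_0 \leq K_1$ at that moment, which rearranges to $K_1 \leq K_0 + c_{j_0}$ as needed---here the crucial observation is that the $c_{j_0}$ appearing in that hypothesis is literally the same number used by the protocol at step $t$, because Alg0 did not act during step $t-1$ and so its pending test was not replaced. If instead step $t-1$ tested Alg0, the hypothesis already gives the strictly stronger $K_1 \leq K_0$ entering step $t$, and the desired bound is immediate after incrementing $K_0$ by a nonnegative quantity.

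The only delicate point I anticipate is this pending-cost bookkeeping: verifying that the $c_{j_0}$ (respectively $c_{j_1}$) appearing in the previous step's invariant really is the same quantity the protocol uses at the current step. This holds precisely because a pending test and its cost change only when its algorithm performs a test, which is the very idleness condition around which the invariant is structured. Apart from this, the proof is a direct consequence of the selection rule and should present no serious obstacle.
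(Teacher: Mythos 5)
Your proof is correct and follows essentially the same route as the paper's: induction on the step number, with the key observation that an idle algorithm's pending test (and hence its cost) carries over unchanged to the next step, so the $c_{j_0}$ or $c_{j_1}$ in the previous step's invariant is the same quantity the selection rule compares at the current step. The only difference is organizational---you fix which algorithm acts in the current step and case on the previous step, while the paper fixes the previous step and cases on the current one---and your base case is spelled out more explicitly; the inequalities are identical.
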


\begin{proof}
The invariant clearly holds
after the first step. Suppose it is true at the end of the $k$\textsuperscript{th} step,
and without loss of generality assume that ${x_j}_1$ was tested during that step.
Thus $K_1 - {c_j}_0 \leq K_0 \leq K_1$ at the end of the $k$\textsuperscript{th} step.

Consider the $k+1$\textsuperscript{st} step.  Note that the value of ${x_j}_0$
is the same in this step as in the previous one,
because in the previous step, we did not execute the next
step of Alg0. There are 2 cases, depending on which if-condition is satisfied when the rule is applied in this step,
$K_0 + {c_j}_0 \leq K_1 + {c_j}_1$ or $K_0 + {c_j}_0 > K_1 + {c_j}_1$.

\textbf{Case 1:} $K_0 + {c_j}_0 \leq K_1 + {c_j}_1$ is satisfied.

Then ${x_j}_0$ is tested in this step and $K_0$ increases by $c_0$. We show that
$K_0 - {c_j}_1 < K_1$ and
$K_1 \leq K_0$ at the end of the step, which is what we need. At the start of the step, $K_0 + {c_j}_0 \leq K_1 + {c_j}_1$
and  at the end, $K_0$ is augmented by ${c_j}_0$, so
$K_0 \leq K_1 + {c_j}_1$. Consequently,
$K_0 - {c_j}_1 \leq K_1$.
Further, by assumption,
$K_1 - {c_j}_0 \leq K_0$ at the start of the step,
and hence at the end, $K_1 \leq K_0$.

\textbf{Case 2:} $K_0 + {c_j}_0 > K_1 + {c_j}_1$ is satisfied
[and by assumption $K_1 - {c_j}_0 \leq K_0 \leq K_1$ at the start]

Then ${x_j}_1$ is tested in this step, and $K_1$ increases by $c_1$.
We show that
$K_1 - {c_j}_0  \leq K_0$ and $K_0 \leq K_1$ at the end of the step. By the condition in the case, $K_0 +{c_j}_0 > K_1 + {c_j}_1$ at the
start of the step, so
at the end, $K_0 + c_{j_0} > K_1$,
and hence $K_1 - {c_j}_0 < K_0$. Further, by assumption
$K_0 \leq K_1$ at the start, and since only $K_1$ was increased,
this also holds at the end.
\end{proof}

We can now prove the following lemma:
\begin{lemma}
\label{lem:costofalgorithm}
If $f(x) = 1$, then at the end of the modified round-robin protocol,
$K_1 \geq K_0$. The lemma holds true symmetrically if $f(x)=0$. \end{lemma}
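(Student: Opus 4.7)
My plan is to split the analysis into two cases depending on how the protocol terminates when $f(x) = 1$, using Lemma~\ref{invariant} in both. Since $f(x) = 1$, the input $x$ contains no 0-certificate, so Alg0 cannot succeed: either Alg1 terminates successfully first, or Alg0 reports failure and then Alg1 is run alone to completion. These two cases cover all possibilities, and the $f(x) = 0$ claim follows by a symmetric argument swapping the roles of Alg0 and Alg1.

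Case (i): Alg1 outputs a 1-certificate before Alg0 terminates. Because Alg1 only finalizes a decision immediately after one of its own tests, the final step of the protocol is one in which ${x_j}_1$ was tested. Reusing the reasoning from Case~2 of the proof of Lemma~\ref{invariant}---namely, the selection rule forced $K_0 + {c_j}_0 > K_1 + {c_j}_1$, and the invariant at the previous step supplied $K_0 - {c_j}_1 \leq K_1$---I would conclude that after ${c_j}_1$ is added to $K_1$ the inequality $K_0 \leq K_1$ holds, which is exactly what we need.

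Case (ii): Alg0 reports failure at some step before Alg1 succeeds. Let $K_0^\ast$ and $K_1^\ast$ denote the cumulative costs right after that step, and let ${c_j}_1$ be the cost of the next test Alg1 had already computed. Applying Lemma~\ref{invariant} to that step (which tested ${x_j}_0$) gives $K_0^\ast - {c_j}_1 \leq K_1^\ast$. The protocol then runs Alg1 alone to completion, and since $x$ contains a 1-certificate, Alg1 must eventually succeed; in particular, it performs its pending test ${x_j}_1$. After that single test, $K_1 \geq K_1^\ast + {c_j}_1 \geq K_0^\ast = K_0$, and every subsequent test by Alg1 only raises $K_1$ further, so $K_1 \geq K_0$ at termination.

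I expect the main obstacle to be the mild mismatch between what Lemma~\ref{invariant} formally states and what Case~(i) requires: the invariant is asserted only for steps in which neither algorithm terminates, whereas in Case~(i) Alg1 does terminate in the step of interest. The resolution is that the specific inequality $K_0 \leq K_1$---the only portion of the invariant actually needed---depends only on the selection rule in the current step and on the invariant at the previous step, and not on what Alg1 does with the test result, so the same derivation carries through even when Alg1 then announces its certificate.
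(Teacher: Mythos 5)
Your proof is correct and takes essentially the same route as the paper's: the same two-case split on how the protocol terminates (one algorithm succeeds, versus one fails and the other is run to completion), the same appeal to Lemma~\ref{invariant} at the last completed step, and the same observation that the pending Alg1 test of cost ${c_j}_1$ closes the remaining gap in the failure case. Your explicit handling of the mismatch between the invariant's non-termination hypothesis and the terminating step is a minor refinement of a point the paper passes over silently.
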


\begin{proof}
There are two ways for the protocol to terminate.
Either Alg0 or Alg1 is detected to have succeeded at the
start of the repeat loop,
or within the loop,
one fails and the other is run to successful termination.

Suppose the former, and without loss of generality suppose it is Alg0 that
succeeded.  It follows that it was ${x_j}_0$ that was tested
at the end of the previous step (unless
this is the first step, which would be an easy case),
because otherwise, the success of Alg0
would have been detected in an earlier step.

Thus at the end of the last step, by Lemma \ref{invariant},
$K_1 \leq K_0$.

Suppose instead that one algorithm
fails, and without loss of generality, suppose it was Alg0, and thus we ran Alg1 to termination.  Since Alg0 did not fail in a prior
step,
it follows that in the previous step,
${x_j}_0$ was tested  (unless this is the first step, which
would be an easy case).  Thus at the end of the previous step,
by the invariant, $K_0 - {c_j}_1 \leq K_1$
and so $K_0 \leq K_1 + {c_j}_1$.
We have to run at least one step of Alg1 when we run it to
termination.  Thus running Alg1 to termination augments $K_1$ by ${c_j}_1$, and so
at the end of the algorithm, we have $K_0 \leq K_1$.
\end{proof}

We now describe the particular Alg0 and Alg1 that we use in our algorithm
for evaluating monotone $k$-DNF.
We describe Alg0 first.
Since $f$ is a monotone function, the variables in
any 0-certificate for $f$ must all be set to 0.
Consider an assignment $x \in \{0,1\}^n$ such that $f(x) = 0$.
Let $Z = \{x_i | x_i = 0\}$.
Finding a min-cost 0-certificate for $f$ contained in $x$
is equivalent to solving the set-cover instance where the
elements to be covered are the terms
$t_1, \ldots, t_m$, and for each $x_i \in Z$,
there is a corresponding subset
$\{t_j | x_i \in t_j\}$ .

Suppose $f(x) = 0$.
If Alg0 was given both $Z$ and $f$ as input, it could find an
approximate solution to this set cover instance using
Hochbaum's  Dual Greedy algorithm
for (weighted) set cover~\cite{hochbaum}.
This algorithm selects items to place in the cover, one by one,
based on a certain greedy choice rule.

Alg0 is not given $Z$, however.  It can only discover
the values of variables $x_i$ by testing them.
We get around this as follows.  Alg0 begins running Hochbaum's
algorithm, using the assumption that all variables are in $Z$.
Each time that algorithm
chooses a variable $x_i$ to place in the cover,
Alg0 tests the variable $x_i$.
If the test reveals that $x_i = 0$, Alg0 continues directly to the next step
of Hochbaum's algorithm.
If, however, the test reveals that $x_i = 1$, it
removes the $x_i$ from consideration, and uses the greedy choice
rule to choose the best variable from the remaining variables.
The variables that are placed in the
cover by Alg0 in this case are precisely those
that would have been placed in the
cover if we had run Hochbaum's algorithm with
$Z$ as input.

Hochbaum's algorithm is guaranteed to construct a cover whose total cost is
within a factor of $\alpha$ of the optimal cover, where $\alpha$ is
the maximum number of subsets in which any ground element appears.
Since each term $t_j$ can contain a maximum of $k$ literals, each term can be covered at most $k$ times.
It follows that when $f(x) = 0$, Alg0 outputs a certificate that is within a factor
of at most $k$ of the minimum cost certificate of $f$ contained in $x$.

If $f(x) = 1$, Alg0 will eventually test all elements without
having constructed a cover,
at which point it will terminate and report failure.

We now describe Alg1.
Alg1 begins by evaluating
the min-cost term $t$ of $f$, where the cost of a term
is the sum of the costs of the variables in it.
(In the unit-cost case, this is the shortest term.
If there is a tie for the min-cost term, Alg1
breaks the tie in some suitable way, e.g., by the lexicographic ordering
of the terms.)
The evaluation is done by testing the variables of
$t$ one by one in increasing cost order
until a variable is found to equal 0,
or all variables have been found to equal 1.
(For variables $x_i$ with equal cost, Alg1 breaks ties in some suitable way, e.g.,
in increasing
order of their indices $i$.)
In the latter case,
Alg1 terminates and
outputs the certificate setting the variables in the
term to 1.

Otherwise, for each tested variable in $t$, Alg1 replaces all
occurrences of that variable in $f$ with its tested value.
It then simplifies the formula
(deleting terms with 0's and deleting 1's from terms,
and optionally making the resulting formula minimal).
Let $f'$ denote the simplified formula.
Because $t$ was not satisfied,
$f'$ does not contain any
satisfied terms.
If $f'$ is identically 0, $x$ does not contain a 1-certificate
and Alg1 terminates unsuccessfully.
Otherwise, Alg1
proceeds recursively on the simplified formula, which contains only
untested variables.

Having presented our Alg0 and Alg1, we are ready to prove
the main theorem of this section.

\begin{theorem}
\label{thm:kdnfcosts}
The evaluation problem for monotone $k$-DNF
can be solved
by a polynomial-time approximation algorithm computing a strategy
that is within a factor of $\frac{4}{\rho^k}$ of the expected certificate cost.
\end{theorem}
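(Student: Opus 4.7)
The plan is to bound the expected total cost $E[K_0+K_1]$ of the protocol by combining Lemma~\ref{lem:costofalgorithm} with separate analyses of Alg0 and Alg1. Pointwise, Lemma~\ref{lem:costofalgorithm} gives $K_0+K_1 \le 2K_0\mathbb{1}_{f(x)=0} + 2K_1\mathbb{1}_{f(x)=1}$, so it suffices to bound each of $E[K_0\mathbb{1}_{f(x)=0}]$ and $E[K_1\mathbb{1}_{f(x)=1}]$ by an $O(1/\rho^k)$ multiple of the corresponding piece of $E[CERT]$; the constant $4$ in the final bound will come from the factor $2$ in this splitting together with a second factor of slack in the Alg0 analysis below.

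For the Alg0 contribution, observe that when $f(x)=0$ the cover produced by Alg0 coincides with what Hochbaum's dual greedy would produce on input $Z=\{x_i:x_i=0\}$, so its cost is at most $k\cdot CERT(f,x)$. Since $\rho\le 1/2$ (as $\rho\le p_i$ and $\rho\le 1-p_i$ force $\rho\le 1/2$), we have $k\le 2^k\le 1/\rho^k$, converting the $k$-factor bound into a $1/\rho^k$ bound on the cover cost. The extra tests that Alg0 \emph{wastes} on variables found to equal $1$ can be absorbed into a further constant factor by charging each such test against the $0$-valued cover variable that Hochbaum ultimately chose in its place.

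For the Alg1 contribution, I prove by induction on the number of terms of $f$ that $A(f) := E[K_1^{full}(f,x)\,\mathbb{1}_{f(x)=1}] \le \frac{1}{\rho^k}\,B(f)$, where $B(f) := E[CERT_1(f,x)\,\mathbb{1}_{f(x)=1}]$ and $K_1^{full}$ is the cost of Alg1 run to completion (an upper bound on the $K_1$ recorded if the protocol halts early). Let $T$ be the min-cost term of $f$ and set $q=\prod_{i\in T}p_i \ge \rho^{|T|}\ge \rho^k$. Condition on whether $T$ is satisfied: with probability $q$, Alg1 pays $\mathrm{cost}(T)$, which equals $CERT_1(f,x)$ because $T$ is the cheapest term of $f$; otherwise Alg1 pays at most $\mathrm{cost}(T)$ to evaluate $T$ up to the $0$ found at some position $J=j$ and then recurses on the simplified formula $f'_j$, while $CERT_1(f,x) \ge CERT_1(f'_j,y)$ because the cost in $f'_j$ of any term satisfied by $x$ is at most its cost in $f$. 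The resulting recurrences $A(f)\le \mathrm{cost}(T) + \sum_j\Pr[T\text{ unsat},J=j]\,A(f'_j)$ and $B(f)\ge q\cdot\mathrm{cost}(T) + \sum_j\Pr[T\text{ unsat},J=j]\,B(f'_j)$, combined with the inductive hypothesis applied to each $f'_j$ (which has strictly fewer terms), reduce the inductive step to the scalar inequality $q\ge \rho^k$.

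The main obstacle will be the Alg1 induction: in particular, justifying the comparison $CERT_1(f,x)\ge CERT_1(f'_j,y)$ for every outcome with $T$ unsatisfied, and checking that the $\mathrm{cost}(T)$ charge on the $A$ side (which Alg1 pays even when $T$ is unsatisfied) is exactly absorbed by applying the $1/\rho^k$ factor to the success term $q\cdot\mathrm{cost}(T)$ on the $B$ side. Once the Alg0 and Alg1 bounds are in hand, combining them through the factor-$2$ doubling from Lemma~\ref{lem:costofalgorithm} yields $E[K_0+K_1]\le \frac{4}{\rho^k}\,E[CERT]$, as claimed.
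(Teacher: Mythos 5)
Your overall architecture is the paper's: the modified round-robin protocol, Hochbaum's dual greedy for Alg0 with approximation ratio $k$ (each term covered by at most $k$ variables), the min-cost-term recursion for Alg1 with the induction on the number of terms driven by $q \geq \rho^k$, and the factor-$2$ doubling from Lemma~\ref{lem:costofalgorithm}. Your Alg1 induction is sound and in fact slightly cleaner than the paper's: by restricting both sides to $\{x : f(x)=1\}$ (and symmetrically handling Alg0 only on $\{x : f(x)=0\}$), you avoid the paper's extra step $E_f[cert] \geq \frac{1}{2}\bigl(E_f[cert] + \sum_{x\in S_0}\Pr[x]\,cert_f(x)\bigr)$ and would end up with $\max\{2k/\rho,\, 2/\rho^k\} \le 4/\rho^k$ rather than needing both factors of $2$.

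The genuine gap is in your treatment of Alg0's useless tests. You propose to absorb them ``into a further constant factor by charging each such test against the $0$-valued cover variable that Hochbaum ultimately chose in its place.'' This pointwise (per-input) charging fails with arbitrary costs: when Alg0 tests $x_i$ and finds $x_i=1$, the cost $c_i$ of that wasted test can be arbitrarily larger than the cost of the $0$-valued variable eventually selected, and several consecutive useless tests can precede a single useful one, so no constant bounds the ratio on a fixed input $x$. The paper's argument is necessarily an amortization \emph{across the distribution}: a useless test of $x_i$ on $x \in S_0$ corresponds to a useful test of $x_i$ on the input $x'$ obtained by flipping $x_i$ to $0$ (which is still in $S_0$ by monotonicity, and on which Alg0 behaves identically up to that test). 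The useless test contributes $c_i p_i$ to the expected cost while its partner useful test contributes $c_i(1-p_i)$, so covering both costs a multiplicative $1/(1-p_i) \le 1/\rho$ on the useful tests --- a factor of $1/\rho$, not a constant. This matters for your arithmetic: the correct Alg0 bound is $\frac{k}{\rho}\,E[CERT\cdot\mathbb{1}_{f(x)=0}]$, and the final constant closes only because $\frac{k}{\rho} \le \frac{1}{\rho^k}$ (using $k \le 2^{k-1} \le (1/\rho)^{k-1}$); an unspecified extra ``constant factor'' on top of your $1/\rho^k$ conversion would not yield exactly $4/\rho^k$. With this one repair your proof goes through.
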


\begin{proof}
Let $f$ be the input monotone $k$-DNF, defined on $x \in \{0,1\}^n$.
We will also use $f$ to denote the function computed by this formula.

Let Alg be the algorithm
for evaluating $f$ that
alternates between the Alg0 and Alg1 algorithms just described, using the modified
round-robin protocol.

Let $S_1 = \{x | f(x) = 1\}$ and $S_0 = \{x | f(x) = 0\}$.
Let $E_f[cost]$ denote the expected cost incurred by the
round-robin algorithm in evaluating $f$ on random $x$.
Let $cost_f(x)$ denote the cost incurred by running the algorithm on $x$.
Let $Pr[x]$ denote the probability that $x =1$ with respect to the product distribution
$D_p$.
Thus $E_f[cost]$ is equal to
$\sum_{x \in S_1} Pr[x]cost_f(x) + \sum_{x \in S_0} Pr[x]cost_f(x)$.
Similarly, let $cert_f(x)$ denote the cost of the minimum cost certificate
of $f$ contained in $x$.
We need to show that the ratio between $E_f[cost]$ and $E_f[cert]$ is
at most $\frac{4}{\rho^k}$.

We consider first the costs incurred by Alg0 on inputs $x \in S_0$.
Following the approach of Kaplan et al.,
we divide the tests performed by Alg0 into two categories,
which we call useful and useless, and amortize the cost of the
useless tests by charging them to the useful tests.
More particularly, we say that
a test on variable $x_i$ is useful to Alg0 if $x_i = 0$
($x_i$ is added to the 0-certificate in this case)
and useless if $x_i = 1$.
The number of useful tests on $x$ is equal to the size of the
certificate output by Alg0, and thus the total cost of the
useful tests Alg0 performs on $x$ is at most $k(cert_f(x))$.

Let $cost_f^0(x)$ denote the cost incurred by Alg0 alone when running
Alg to evaluate $f$ on $x$, and
let $cost_f^1(x)$ denote the cost incurred by Alg1 alone.
Suppose Alg0 performs a useless test on an $x \in S_0$, finding that $x_i=1$.
Let $x'$ be the assignment produced from $x$ by setting $x_i$ to $0$.
Because $f(x)=0$ and $f$ is monotone, $f(x') = 0$ too.
Because $x$ and $x'$ differ in only one bit, if Alg0 tests $x_i$
on assignment $x$, it will test $x_i$ on $x'$, and that
test will be useful.
Thus each useless test performed by Alg0 on $x\in S_0$
corresponds to a distinct useful test
performed on an $x' \in S_0$.
When $x_i$ is tested, the probability that it is 1 is $p_i$,
and the probability that it is 0 is $1-p_i$.
Each useless test contributes $c_ip_i$ to the expected cost,
whereas each useful test contributes $(1-p_i)c_i$.
If we multiply the contribution of the useful test by $1/(1-p_i)$,
we get the contribution of both a useful and a useless test, namely $c_i$.
To charge the cost of a useless test to its corresponding useful test,
we can therefore multiply the cost of the useful test by $1/(1-p_i)$
(so that if, for example, $p_i = 1/2$, we charge double for the
useful test).
Because $1/(1-p_i) \leq \frac{1}{\rho}$ for all $i$,
it follows that
$\sum_{x \in S_0} Pr[x] cost_f^0(x) \leq \frac{1}{\rho} \sum_{x \in S_0} Pr[x] (k(cert_f(x)))$.
Hence,
$$\frac{\sum_{x \in S_0} Pr[x]cost_f^0(x)}{\sum_{x \in S_0} Pr[x]cert_f(x)} \leq \frac{k}{\rho}.$$

We will now show, by induction on the number of terms
of $f$, that
$E[cost_f^1(x)]/E[cert_f(x)] \leq \frac{1}{\rho^k}$.

If $f$ has only one term,
it has at most $k$ variables.
In this case, Alg1 is just using the na\"{i}ve algorithm which tests
the variables in increasing cost order until the function
value is determined.
Since the cost of using the na\"{i}ve algorithm
on $x$ in this case
is at most $k$ times $cert_f(x)$,
$\rho \leq 1/2$, and $k \leq 1/2^k$ for all $k \geq 1$, it follows that
$E[cost_f^1(x)]/E[cert_f(x)] \leq k \leq \frac{1}{2^k} \leq \frac{1}{\rho^k}$.
Thus we have the base case.

Assume for the purpose of induction that
$E[cost_f^1(x)]/E[cert_f(x)] \leq \frac{1}{\rho^k}$ holds for $f$ having
at most $m$ terms.
Suppose $f$ has $m+1$ terms.
Let $t$ denote the min-cost term.
Let $C$ denote the cost of $t$, and $k'$ the number of variables in $t$,
so $k' \leq k$.
If $x$ does not satisfy term $t$, then after
Alg1 evaluates term $t$ on $x$,
the results of the tests performed in the evaluation
correspond to a partial assignment $a$
to the variables in $t$.
More particularly, if Alg1 tested exactly
$z$ variables of $t$, the test results
correspond to the
partial assignment $a$ setting the $z-1$ cheapest variables
of $t$ to 1 and the $z$\textsuperscript{th} to 0, leaving all other
variables in $t$ unassigned.
There are thus $k'$ possible values for $a$.
Let $T$ denote this set of partial assignments $a$.

For $a \in T$,
let $f[a]$ denote the formula obtained from $f$ by replacing
any occurrences of variables in $t$ by their assigned values in $a$
(if a variable in $t$ is not assigned in $a$, then occurrences
of those variables are left unchanged).
Let $T_0 = \{ a \in T|$ $f[a]$ is identically 0 $\}$,
and let $T_* = T - T_0$.
For any $x$, the cost incurred by Alg1 in evaluating $t$ on $x$
is at most $C$.
For $x \in T_0$, Alg1 only evaluates $t$, so its total cost on
$x$ is at most $C$.
Let $Pr[a]$ denote the joint probability of obtaining the
observed values of those variables tested in $t$. More formally, if $W$ is the set of variables tested in $t$, $Pr[a] = \prod_{i: x_i \in W \wedge x_i = 1}p_i \prod_{i:x_i \in W \wedge x_i = 0} (1-p_i)$.
We thus have the following recursive expression:
$$E[cost_f^1(x)] \leq C + \sum_{a \in T_*} Pr[a] E[cost_{f[a]}^1(x[a])]$$
where $x[a]$ is a random assignment to the variables of
$f$ not assigned values in $a$, chosen independently according to the relevant
parameters
of $D_p$.

For any $x$ satisfying $t$, since $t$ is min-cost and $f$ is monotone,
$cert_f(x) = C$.
Let $x \in T_*$, and let $a_x$ be the partial assignment
representing the results of the tests Alg1 performed in evaluating $t$
on $x$.
Let $\hat{x}$ be the restriction of $x$ to the variables of $f$ not assigned values
by $a_x$.
Any certificate for $f$ that is contained in $x$ can be converted into
a certificate
for $f[a_x]$, contained in $\hat{x}$, by simply removing the variables
assigned values by $a_x$.
It follows that $cert_f(x) \geq cert_{f[a_x]}(\hat{x})$.

Since $k' \leq k$, the probability that $x$ satisfies the first
term is at least $\rho^k$. By ignoring the $x \in T_0$ we get
$$E_f[cert] \geq \rho^kC + \sum_{a \in T_*}Pr[a]E[cert_{f[a]}(x[a])]$$

The ratio between the first term in the expression bounding $E[cost_f^1(x)]$, to the first term in the expession bounding $E_f[cert]$, is equal
to $\frac{1}{\rho^k}$.  By induction, for each $a \in T_*$,
$E[cost_{f[a]}^1(x[a])]/E[cert_{f[a]}(x[a])] \leq \frac{1}{\rho^k}$.
Thus
$$E_f^1[cost]/E_f[cert] \leq \frac{1}{\rho^k}$$

Clearly, $E_f[cost] = E_f^1[cost] + E_f^0[cost] = E_f^1[cost] + \sum_{x \in S_1}Pr[x]cost_f^0(x) + \sum_{x \in S_0} Pr[x] cost_f^0(x)$.
By Lemma~\ref{lem:costofalgorithm},
the cost incurred by Alg on any $x \in S_1$
is at most twice the cost incurred by Alg1 alone on that $x$.
Thus $E_f[cost] \leq 2E_f^1[cost]  + \sum_{x \in S_0} Pr[x] cost_f^0(x)$.
Further,
$E_f[cert] \geq \frac{1}{2}(E_f[cert] + \sum_{x \in S_0} Pr[x]cert_f(x))$ because
$x \in S_0$ contributes $Pr[x]cert_f(x)$ to both $E_f[cert]$ and
to the summation over $S_0$.

It follows from the above that $E_f[cost]/E_f[cert]$ is at most
$\max\{\frac{4}{\rho^k}, \frac{2k}{\rho}\} = \frac{4}{\rho^k}$, since
$k \geq 1$.
\end{proof}

\subsection{Monotone $k$-term DNF formulas}
\label{sec:monotonekterm}
We can use techniques from the previous subsection to obtain results for the class of monotone $k$-term DNF formulas as well. In Section \ref{sec:exactmonotonekterm}, we will present an exact algorithm whose running time is exponential in $k$. Here we present an approximation algorithm that runs in time polynomial in $n$, with no dependence on $k$.

\begin{theorem}
\label{thm:ktermdnf}

The evaluation problem for monotone $k$-term DNF can be solved by a polynomial-time approximation algorithm computing a strategy that is within a factor of  $\max\{2k, \frac{2}{\rho}(1+ \ln k)\}$  of the minimum-cost certificate.
\end{theorem}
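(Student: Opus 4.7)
My plan is to mirror the structure of Theorem~\ref{thm:kdnfcosts}: reuse the modified round-robin protocol from Section~\ref{sec:monotonek}, but replace Alg0 and Alg1 with variants tailored to the dual parameterization where there are at most $k$ terms, each possibly long. Once one-sided approximation guarantees are in place, combining them with Lemma~\ref{lem:costofalgorithm} will absorb an extra factor of $2$ and yield the announced bound $\max\{2k, \frac{2}{\rho}(1+\ln k)\}$.

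For the 0-certificate side, I would observe that a 0-certificate amounts to a hitting set for the (at most $k$) terms, which is weighted set cover on a universe of size $k$. Alg0 would run the standard greedy set-cover heuristic (in place of Hochbaum's dual greedy) and test each variable it selects: if the value is $0$ the variable joins the cover and greedy continues, and if the value is $1$ the variable is discarded and the greedy rule is reapplied on the remaining variables. The same ``pretend every variable is in $Z$'' argument used in Section~\ref{sec:monotonek} shows the variables ultimately added are exactly what standard greedy would have chosen on $Z$, so the useful-test cost is within $H_k \le 1 + \ln k$ of the min-cost 0-certificate. Pairing each useless test on $x \in S_0$ with the useful test on the input obtained by flipping the tested bit from $1$ to $0$ then inflates this by at most $1/(1-p_i) \le 1/\rho$, giving
\[
\sum_{x \in S_0} \Pr[x]\, cost_f^0(x) \le \frac{1+\ln k}{\rho} \sum_{x \in S_0} \Pr[x]\, cert_f(x).
\]

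For the 1-certificate side, Alg1 would sort the terms in non-decreasing cost order (with some canonical tie-breaking) and process them one at a time: for the current term, test its variables in non-decreasing cost order; if some test returns $0$, drop the term and move to the next, otherwise output the full term as a 1-certificate. Writing $C_j$ for the cost of the $j$-th sorted term, if $f(x)=1$ and $t_{j^*}$ is the first satisfied term in the order, Alg1 fully tests only $t_1, \ldots, t_{j^*}$, so its standalone cost on $x$ is at most $\sum_{j \le j^*} C_j \le j^*\, C_{j^*} \le k\, cert_f(x)$, since for a monotone DNF the min-cost 1-certificate of any $x \in S_1$ equals the cost of its cheapest satisfied term. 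Hence
\[
\sum_{x \in S_1} \Pr[x]\, cost_f^1(x) \le k \sum_{x \in S_1} \Pr[x]\, cert_f(x).
\]

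Finally, by Lemma~\ref{lem:costofalgorithm} the total protocol cost on $x \in S_1$ is at most $2\,cost_f^1(x)$ and symmetrically on $x \in S_0$ it is at most $2\,cost_f^0(x)$; summing and applying the two displayed bounds yields $E_f[cost] \le \max\{2k,\, \frac{2}{\rho}(1+\ln k)\}\, E_f[cert]$. The only step I expect to require real care is verifying that the online ``discard-on-$1$'' variant of standard greedy still realizes the $H_k$-approximation guarantee promised by the offline analysis on $Z$, exactly as was argued for Hochbaum's algorithm in Section~\ref{sec:monotonek}; after that, the rest is a straightforward transposition of the $k$-DNF argument.
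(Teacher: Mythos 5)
Your proposal is correct and follows essentially the same route as the paper's proof: Chv\'atal's greedy set cover on the universe of at most $k$ terms for Alg0 (with the same test-and-discard handling of variables revealed to be 1 and the $1/\rho$ charge for useless tests), term-by-term evaluation in cost order for Alg1 with the $\sum_{j\le j^*} C_j \le k\,cert_f(x)$ bound, and the factor of $2$ from the modified round-robin protocol. No substantive differences.
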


\begin{proof}
Let $f$ be the input monotone $k$-term DNF, defined on $x \in \{0,1\}^n$.

Just as in the proof of Theorem \ref{thm:kdnfcosts}, we will utilize a modified round robin protocol that alternates between one algorithm for finding a 0-certificate (Alg0) and one for finding a 1-certificate (Alg1). Again, let $S_1 = \{x | f(x) = 1\}$ and $S_0 = \{x | f(x) = 0 \}$.

However, in this case Alg0 will use Greedy, Chv\'{a}tal's well-known greedy algorithm for
weighted set cover~\cite{Chvatalsetcoverharmonic}, instead of the Dual Greedy algorithm of Hochbaum.
The standard greedy algorithm simply  maximizes, at each iteration,
``bang for the buck'' by selecting the subset that covers the largest number
of uncovered elements relative to the cost of selecting that subset.
Greedy yields a $H(m)$ approximation, where $m$ is the number of ground
elements in the set cover instance and $H(m)$ is the $m$\textsuperscript{th} harmonic number,
which is upper bounded by $1 + \ln m$. Once again, we will view the terms
as ground elements and the variables that evaluate to 0 as the subsets.
Since $f$ has at most $k$ terms, there are at most $k$ ground elements.
On any $x \in S_0$,
Greedy will yield a certificate that is within a factor of $1 + \ln k$
of the min-cost 0-certificate $cert_f(x)$, and
thus the cost incurred by the useful tests on $x$ (tests on $x_i$ where $x_i = 0$)
is at most $cert_f(x)(1+\ln k)$. By multiplying by $1/\rho$
the charge to the variables that evaluate to 0, to account for the useless tests,
we get
that the cost incurred by Alg0 on $x$, for $x \in S_0$, is
at most
$\frac{1}{\rho}cert_f(x)(1+ \ln k)$.

Alg1 in this case simply evaluates
$f$ term by term, each time choosing the remaining term of minimum cost
and evaluating all of the variables in it.
Without loss of generality, let $t_1$ be the first
(cheapest) term evaluated by Alg1, and $t_i$ be the $i$th term evaluated.
Suppose $x  \in S_1$.  If $x$ falsifies terms $t_1$ through $t_{i-1}$
and then satisfies $t_i$, $cert_f(x)$ is precisely the
cost of $t_i$, and Alg1 terminates after evaluating $t_i$.
Since none of the costs of the first $i-1$ terms exceeds the cost of $t_i$,
the total cost of evaluating $f$
is at most $k$ times the cost of $t_i$. Hence, Alg1 incurs a cost of at most $k(cert_f(x))$.

By executing the two algorithms according to
the modified round robin protocol, we can solve the problem of evaluating
monotone $k$-term DNF with cost no greater than double the cost incurred by Alg1, when $x \in S_1$,
and no more than double the cost incurred by Alg0, when $x \in S_0$.
Hence the total cost of the algorithm is within a
factor of $\max\{2k, \frac{2}{\rho}(1+ \ln k)\}$ of the cost of the min-cost certificate for $x$.
\end{proof}

We now prove that the problem of exactly evaluating monotone
$k$-term DNF can be solved in polynomial time for constant $k$.

\section{Exact learning of monotone $k$-term DNF}
\label{sec:exactmonotonekterm}

In this section, we provide an exact algorithm for evaluating k-term DNF formulas in polynomial time for constant $k$. First, we will adapt results from
Greiner et al.~\cite{Greiner06} to show some properties of optimal strategies for monotone DNF formulas. Then we will use these properties to compute an optimal strategy monotone k-term DNF formulas.
Greiner et al. \cite{Greiner06} consider evaluating \textit{read-once} formulas with the minimum expected cost. Each \textit{read-once} formula can be described by a rooted and-or tree where each leaf node is labeled with a test and each internal node is labeled as either an or-node or an and-node. The simplest \emph{read-once} formulas are the simple AND and OR functions, where the depth of the and-or tree is 1.
Other \textit{read-once} formulas can be obtained by
taking the
AND or OR of other \textit{read-once} formulas
over disjoint sets of variables. In the and-or tree, an internal node whose children include at least one leaf is called a \textit{leaf-parent}, leaves with the same parent are called \textit{leaf-siblings} (or \textit{siblings}) and the set of all children of a \textit{leaf-parent} is called a \textit{sibling class}. Intuitively, the siblings have the same effect on the value of the \textit{read-once} formula. The ratio of a variable $x_i$ is defined to be $R(i)=\frac{p_i}{c_i}$. Further, tests $x_1$ and $x_2$ are \textit{R-equivalent} if they are \textit{leaf-siblings} and $R(x_1)=R(x_2)$. An \textit{R-class} is an equivalence class with respect to the relation of being \textit{R-equivalent}. Greiner et al. show that, for any and-or tree, (WLOG they assume that leaf-parents are OR nodes), there is an optimal strategy $S$ that satisfies the following conditions:
\begin{itemize}
\item[(a)] For any sibling tests $x$ and $y$ such that $R(y)>R(x)$, $x$ is not performed before $y$ on any root-to leaf path of $S$.
\item[(b)] For any \textit{R-class} $W$, $S$ is contiguous with respect to $W$.
\end{itemize}

We observe that by redefining siblings and sibling classes, corresponding properties hold for general monotone DNF formulas.
Let us define a maximal subset of the variables that appear in exactly the same set of terms as a \textit{sibling class} in a DNF formula.
All the other definitions can easily be adapted accordingly. In this case, the ratio of a variable $x_i$ is $R(i)=\frac{q_i}{c_i}$. For instance, all variables are siblings for an AND function, whereas no two variables are siblings in an OR function.

It is possible to adapt the proof of Theorem 20 in \cite{Greiner06} to apply
to monotone DNF formulas. All the steps of the proof can be adapted in this context, using the new definitions of siblings and the ratio of a variable.
\begin{theorem}
\label{thm:contiguity}
For any monotone DNF, there exists an optimal testing strategy $S$ that satisfies conditions (a) and (b) stated above.

\end{theorem}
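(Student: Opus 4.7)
The plan is to adapt the proof of Theorem 20 of Greiner et al.\ by replacing the read-once gate structure with a purely formula-level symmetry argument. I would begin with a symmetry lemma: if $x_i$ and $x_j$ are siblings under the new definition (i.e., the set of terms containing $x_i$ equals the set of terms containing $x_j$), then every term contains both or neither, so the formula $f$ is invariant under the transposition $\sigma_{ij}$ that swaps the two coordinates, i.e., $f(x)=f(\sigma_{ij}(x))$ for all $x\in\{0,1\}^n$. This is the one property of siblings that the original proof really uses through the local structure of an OR/AND gate, and it is exactly what carries over when we generalize to monotone DNF.

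For condition (a), I would argue by an exchange. Suppose $S$ is optimal but on some root-to-leaf path a sibling $x$ with $R(x)<R(y)$ is tested strictly before its sibling $y$. Pick the shallowest such violating pair on the shallowest such path, and let $v$ be the node where $x$ is tested. Construct $S'$ from $S$ by relabeling $x\leftrightarrow y$ throughout the subtree rooted at $v$; by the symmetry lemma, $S'$ still computes $f$. For any assignment $x$ reaching $v$, the tests performed by $S'$ after the swap point are the same tests performed by $S$ after relabeling, so the expected downstream cost in the two subtrees can be paired off bijectively via $\sigma_{xy}$. The expected cost difference therefore collapses to a two-variable comparison at $v$, and a routine calculation shows that $E[\mathrm{cost}(S)]-E[\mathrm{cost}(S')]$ is proportional to $c_x q_y - c_y q_x$, which is nonnegative exactly when $R(y)\ge R(x)$. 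This contradicts optimality of $S$ unless $R(x)=R(y)$, in which case we can swap at no cost and proceed by induction to eliminate all violations.

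For condition (b), I would apply the exchange argument of (a) inside a single R-class $W$. Because every pair $x,y\in W$ has $R(x)=R(y)$, the cost calculation above is tight: swapping any two $W$-tests, or swapping a $W$-test with an immediately adjacent non-$W$ test when this does not violate (a), leaves the expected cost unchanged. So if on some root-to-leaf path a test $z\notin W$ lies between two tests of $W$, we can slide $z$ out of the block of $W$-tests by a finite sequence of such cost-neutral swaps, and push the $W$-tests together into a contiguous block while preserving (a). Repeating over all R-classes and all offending paths yields an optimal $S$ satisfying both conditions simultaneously.

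The main obstacle is verifying that the downstream-cost pairing in the exchange argument for (a) really does cancel: in the read-once setting this is immediate because the two siblings live under a single gate whose output is all the rest of the tree sees, but in the general DNF case the two variables may participate in many terms and influence many subsequent tests. This is precisely where the global symmetry lemma $f\circ\sigma_{ij}=f$ is essential; once it is in hand, the bijective pairing of leaves via $\sigma_{xy}$ is valid and the rest of the argument reduces to the same two-variable comparison that drives Greiner et al.'s original proof.
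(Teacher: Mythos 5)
Your overall plan --- adapt Greiner et al.'s Theorem~20 by replacing the gate-level sibling structure with a formula-level symmetry --- is the same route the paper takes (the paper itself gives no more detail than ``all the steps of the proof can be adapted''), and your symmetry lemma is correct and is indeed the right generalization: since every term contains both siblings or neither, $f\circ\sigma_{ij}=f$, and moreover $f|_{x_i=0}$, $f|_{x_j=0}$ and $f|_{x_i=1,x_j=0}$ all coincide as functions of the remaining variables. The interchange computation you invoke is also correct \emph{when $y$ is tested immediately after $x$ on the $x=1$ branch}: there the cost difference is exactly $c_xq_y-c_yq_x$.

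The gap is the claim that the whole-subtree relabeling ``collapses to a two-variable comparison at $v$'' in general. If other tests intervene between $x$ and its sibling $y$, the relabeled tree's cost differs from the original by a quantity that is \emph{not} proportional to $c_xq_y-c_yq_x$ and can have the opposite sign. Concretely, take $f=xy\vee z$ and the strategy $S$: test $x$; then test $z$; if $x=1$ and $z=0$, test $y$. Its relabeling $S'$ tests $y$, then $z$, then possibly $x$, and
\[
E[\mathrm{cost}(S)]-E[\mathrm{cost}(S')]=\bigl(c_x+p_xq_zc_y\bigr)-\bigl(c_y+p_yq_zc_x\bigr)=c_x(1-p_yq_z)-c_y(1-p_xq_z),
\]
which for $c_x=2$, $c_y=1$, $p_x=0.1$, $p_y=0.6$, $q_z=0.5$ equals $+0.45$ even though $R(x)=0.45>0.4=R(y)$; i.e., the relabeling that restores ratio order strictly \emph{increases} the cost. (The theorem is not contradicted --- the optimal strategy here tests $z$ first --- but your exchange step, applied verbatim to a hypothetical optimal tree violating (a), would not produce the contradiction you need.) The real work in Greiner et al.'s proof is precisely the reduction of the non-adjacent case to the adjacent one: showing that intervening tests can be moved out from between two siblings without increasing cost, which uses the coincidence of the restrictions $f|_{x=0}=f|_{x=1,y=0}$ and is not supplied by the symmetry lemma alone. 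The same objection applies to your argument for (b): swapping a $W$-test with an adjacent non-$W$ test is not a sibling relabeling, is not covered by your interchange calculation, and is not cost-neutral in general, so the ``sliding'' step is unjustified as stated.
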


In other words, there exists an optimal strategy such that on any path from the root to the leaf, sibling tests appear in non-decreasing order of their ratios. Further, for this strategy, sibling tests with the same ratio (\textit{R-class}) appear one after another on any path from the root to the leaf. By a duality argument, a similar result holds for monotone CNFs by defining the ratio of a variable as $\frac{c_i}{p_i}$ and \textit{sibling class} as a set of variables that appear in exactly the same set of clauses. For a $k$-term monotone DNF there are at most $2^k-1$ sibling classes, since each sibling class corresponds to a non-empty subset of the terms of the monotone DNF formula. Next, we provide a dynamic programming based method to find an optimal strategy.

\begin{theorem}
The evaluation problem for monotone $k$-term DNF formula $\phi$ over a product distribution on input $x$ and with arbitrary costs can be solved exactly in polynomial time for constant $k$.
\end{theorem}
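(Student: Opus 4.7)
My plan is to convert Theorem~\ref{thm:contiguity} into a dynamic program whose state space is constant-sized for constant $k$. A monotone $k$-term DNF has at most $2^k - 1$ sibling classes, one per non-empty subset of the $k$ terms, and Theorem~\ref{thm:contiguity} guarantees an optimal strategy that (a) tests the variables inside each sibling class $W$ in the fixed order determined by the ratios $R(i) = q_i/c_i$, and (b) does so contiguously on every root-to-leaf path. I can therefore view such a strategy as a sequence of \emph{macro-moves}, each of which picks an unvisited sibling class $W$ and runs its ratio-sorted test sequence until either some $x_i = 0$ is discovered or all of $W$ has come up $1$.

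I would next argue that the strategy is never forced to halt in the interior of a macro-move on $W$. A $1$-outcome neither satisfies nor falsifies any term, while a $0$-outcome falsifies every term containing $W$; and since the remaining untested variables of $W$ appear only in those (now dead) terms, they become irrelevant to $f$. So a macro-move on $W$ admits exactly two outcome summaries, ``found a $0$'' (probability $1 - q_W$) and ``all $1$'' (probability $q_W := \prod_{i \in W} p_i$), and its expected cost is
\[
E_W \;=\; \sum_{j=1}^{|W|} c_{i_j} \prod_{\ell < j} p_{i_\ell},
\]
where $x_{i_1}, x_{i_2}, \ldots$ is the ratio order inside $W$. Both $E_W$ and $q_W$ can be precomputed for every class in $O(n)$ time in a single pass.

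I define a DP state $\sigma$ as a labelling of every sibling class as \emph{unvisited}, \emph{completed-with-0}, or \emph{completed-with-1}. From $\sigma$ one reads off, for each term $t_j$, whether it is falsified (some containing class is completed-with-0), satisfied (every containing class is completed-with-1), or still undetermined; in the first two cases $f(x)$ is already known. Let $V(\sigma)$ denote the minimum expected remaining cost, with $V(\sigma)=0$ on states where $f$ is determined. The Bellman equation is
\[
V(\sigma) \;=\; \min_{W\text{ unvisited}} \Bigl( E_W \;+\; q_W\, V(\sigma_{W \to 1}) \;+\; (1 - q_W)\, V(\sigma_{W \to 0}) \Bigr),
\]
and can be solved bottom-up over the at most $3^{2^k - 1}$ states, each with at most $2^k - 1$ branches. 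Since both quantities are constant for constant $k$, and precomputing the $E_W$ and $q_W$ is polynomial in $n$, the entire algorithm runs in time polynomial in $n$ for constant $k$, yielding an optimal strategy via the arg-mins.

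The main delicate step is justifying that ``found a $0$'' can be collapsed into a single successor state independent of where inside $W$ the $0$ first appeared. This is sound because all ratio-order completions of $W$ that encounter a first $0$ share the same relevant future: the remaining variables of $W$ occur only in dead terms and may be left untested. Conditioning on the position of the first $0$ would merely refine how $E_W$ is apportioned across the conditional branches, but since expected values are linear and $E_W$ already aggregates exactly these contributions, the Bellman equation faithfully tracks the total expected cost. With this verified, correctness of the DP follows from Theorem~\ref{thm:contiguity}, establishing the theorem.
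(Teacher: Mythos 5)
There is a genuine gap: your macro-move decomposition assumes that an optimal strategy tests each \emph{sibling class} contiguously, but Theorem~\ref{thm:contiguity} does not give you that. Condition (b) guarantees contiguity only for \emph{R-classes}, i.e.\ siblings that additionally share the same ratio $R(i)=q_i/c_i$. With arbitrary costs and probabilities a single sibling class typically splits into several R-classes, and after an intermediate $1$-outcome an optimal strategy may well suspend that class and test a variable from a different class before returning. (Consider $f = x_1x_2 \vee x_3x_4$ with $x_1$ very cheap, $x_2$ very expensive, and $x_3,x_4$ moderately priced: after observing $x_1=1$ the residual function is $x_2 \vee x_3x_4$, and it can be strictly better to evaluate $x_3x_4$ before committing to the expensive $x_2$.) Your defense of the macro-move addresses only the ``found a $0$'' branch, where collapsing the successor state is indeed sound because the untested remainder of the class occurs only in falsified terms; it does not address the interior $1$-outcomes, which are exactly where interleaving can pay off. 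Consequently your DP optimizes over a strictly smaller class of strategies than the one Theorem~\ref{thm:contiguity} certifies to contain an optimum, and the argument does not establish optimality.

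The paper's proof repairs precisely this point by enlarging the state: for each sibling class the DP records how far along its ratio-sorted order the evaluation has progressed (not evaluated; $x^j_1$ seen to be $1$; \dots; $x^j_{\ell^j}$ seen to be $1$; some variable seen to be $0$), so the Bellman recursion may choose, at every step, the next-in-line variable of \emph{any} class. This uses only condition (a) plus the collapse-on-$0$ observation, and costs $O(n^{2^k})$ states rather than your $3^{2^k-1}$. Your constant-state DP is essentially the paper's Corollary~\ref{cor:kDNFunit}: under unit costs and the uniform distribution all ratios coincide, each sibling class is a single R-class, whole-class contiguity does follow from condition (b), and the $O(2^{2^k})$-time block DP is then correct. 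To fix your proof of the general theorem, either prove whole-sibling-class contiguity directly (which appears false) or refine your states to track per-class progress as the paper does.
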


\begin{proof}
We will use a dynamic programming method similar to that used in \cite{guijarroRaghavantruthtable} for building decision trees for functions defined by truth tables. We use notation consistent with that paper.

Let $f$ be the function that is defined by $\phi.$
We will construct a table $P$ indexed by partial assignments to the $t = 2^k$ sibling classes. By Theorem \ref{thm:contiguity}, there is an optimal evaluation order of the variables within each sibling class. Let $1 \leq j \leq t$ index the sibling classes in arbitrary order. For each sibling class $s^j$, let us rename the variables contained in it $x^j_i$, where $1 \leq i \leq \ell^j$ refers to the position of the variable in the testing order according to their ratios $R(i) =\frac{q_i}{c_i}$, and where $\ell^j$ refers to the number of variables within the class $s^j$. Hence, for each class we will have $\ell^j + 2$ states in $P$: not evaluated, variable $x^j_1$ evaluated to 1, variable $x^j_2$ evaluated to 1, \ldots variable $x^j_{\ell^j}$ evaluated to 1, any variable evaluated to 0. (Due to monotonicity, the evaluation of any variable to 0 ends the evaluation of the entire class.) Given the optimal ordering, the knowledge of which variable $i$ of a sibling class was evaluated last is sufficient to determine which variable $i+1$ should be evaluated next within that class. Given a partial assignment $\alpha$ that is being evaluated under an optimal testing strategy, let $s^j_i$ denote the variable $x^j_i$ that will be evaluated next for each class $s^j$ (that is, that the values of variables $x^j_h$ for all $h < i$ have already been revealed).

At each position $\alpha$ in the table, we will place the decision tree with the minimum expected cost that computes the function $f_\alpha$, where $f_\alpha$ is the function $f$ defined by $\phi$ projected over the partial assignment $\alpha$. Then, once the full table $P$ has been constructed, $P[*^n]$ (the value for the empty assignment) will provide the minimum cost decision tree for $f$.

For any Boolean function $g$, let $|g|$ denote the size of the minimum cost decision tree consistent with $g$. For any partial assignment $\alpha$ and any variable $v$ not assigned a value in $\alpha$, let $\alpha \cup {v \gets b}$ denote the partial assignment created by assigning the value $b \in \{0,1\}$ to $v$
 to extend $\alpha$.
Let $c^j_i$ denote the cost of evaluating $x^j_i$, let $p^j_i$ denote the probability that $x^j_i = 1$, and let $q^j_i$ denote the probability that $x^j_i = 0$.

We can construct the table $P$ using dynamic programming by following these rules:

\begin{enumerate}
	\item For any complete assignment $\alpha$, the minimum size decision tree has a cost of 0, since no variables need to be evaluated to determine its value. Hence, the value $P[\alpha] = 0$.
	\item For any partial assignment $\alpha$ such that there exists a variable $v$ that has not yet been evaluated and $f_{\alpha \cup v \gets 0} = f_{\alpha \cup v \gets 1}$, then $f_\alpha$ =  $f_{\alpha \cup v \gets 0}$ and the entry $P[\alpha] =  P[\alpha \cup v \gets 0]$.
	\item For any partial assignment $\alpha$ that does not meet conditions 1 or 2, then
\[ |f_\alpha| = \min_{s^j_i \forall j} c^j_i + p^j_i \times |f_{\alpha \cup x^j_i \gets 1}| + q^j_i \times |f_{\alpha \cup x^j_i \gets 0}|. \]

Then we can fill in the entry for $P[\alpha]$ by finding the next variable $s^j_i$ that has the minimum cost testing strategy, placing it at the root of a tree and creating left and right subtrees accordingly.
\end{enumerate}

Since there are $t$ sibling classes and each can have at most $n$ variables, we can construct $P$ in time $O(n^t)$. Since $t = 2^k$, the dynamic program will run in time $O(n^{2^k})$.
\end{proof}

\begin{corollary}
\label{cor:kDNFunit}
The evaluation problem for monotone $k$-term DNF, restricted to the
uniform distribution on input $x$ and unit costs, can be solved exactly
in polynomial time for $k = O(\log \log n)$.
\end{corollary}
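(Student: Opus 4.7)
The bound $O(n^{2^k})$ from the preceding theorem is only quasi-polynomial ($n^{O(\log n)}$) when $k = O(\log\log n)$, so I would prove the corollary by refining the DP so that its state space depends on $n$ only through an $O(n)$ preprocessing pass, giving a running time of $2^{O(2^k)}$ as announced in Table~\ref{tab:results}; this is polynomial in $n$ precisely when $2^k = O(\log n)$.

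The key observation is that in the UC/UD case every variable in a sibling class $s^j$ has the same ratio $R = 1/2$, so each sibling class is itself a single $R$-class. Theorem~\ref{thm:contiguity} then guarantees an optimal strategy in which all tests of a given sibling class appear consecutively on each root-to-leaf path. I would then argue that one may further assume the optimal strategy never stops a sibling class midway except upon finding a $0$: by contiguity, an abandoned class can never be revisited, so a partial prefix of $1$'s in $s^j$ yields no usable information about the terms containing $s^j$. Either the subsequent tests determine $f(x)$ regardless of $s^j$ (and the partial tests of $s^j$ were wasted, contradicting optimality), or they do not (and the strategy fails to output $f(x)$). Hence on every root-to-leaf path each sibling class is in one of three terminal states: untouched, \emph{dead} (a $0$ was found, falsifying every term containing $s^j$), or \emph{satisfied} (all $L_j$ variables tested and found to be $1$).

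With this reduction the DP state collapses to a vector $\sigma \in \{A,B,C\}^t$ over the $t \le 2^k-1$ sibling classes, giving $3^{2^k}$ states. For each $\sigma$ one can decide in $O(k \cdot 2^k)$ time whether the dead/satisfied pattern already determines $f$ (a term is satisfied iff every class containing it is in state $C$, and falsified iff some such class is in state $B$); if so, $F(\sigma)=0$, and otherwise
\[ F(\sigma) \;=\; \min_{j:\,\sigma_j=A}\bigl[\,E_j + (1-2^{-L_j})\,F(\sigma_{j\to B}) + 2^{-L_j}\,F(\sigma_{j\to C})\,\bigr], \]
where $E_j = 2 - 2^{1-L_j}$ is the expected number of tests needed to resolve class $s^j$, computed once from the sizes $L_j$ in $O(n)$ time. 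Filling the table bottom-up costs $O(2^k \cdot 3^{2^k}) = 2^{O(2^k)}$, which is $n^{O(1)}$ as soon as $k = O(\log\log n)$; the value $F(A^t)$ is the optimal expected evaluation cost, and standard back-pointers recover the strategy itself.

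The main obstacle, in my view, is the middle paragraph: rigorously ruling out optimal strategies that test a class only partially without finding a $0$. One has to combine monotonicity of $f$, the contiguity property from Theorem~\ref{thm:contiguity}, and the symmetry of the uniform distribution over a sibling class to conclude that such partial evaluations can be eliminated without increasing expected cost, since all other steps of the plan are routine once the state space has been collapsed.
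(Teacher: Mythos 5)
Your proposal is correct and follows essentially the same route as the paper: under unit costs and the uniform distribution all ratios coincide, so each sibling class becomes a single $R$-class evaluated as a block, and the dynamic program collapses to a three-state-per-class table of size $3^{t}$ with $t \le 2^k$, giving running time $O(2^{2^k})$, polynomial for $k = O(\log\log n)$. Your explicit recurrence with $E_j = 2 - 2^{1-L_j}$ and branch probabilities $1-2^{-L_j}$ and $2^{-L_j}$ is in fact a cleaner (and correctly normalized) version of the paper's displayed rule, and the issue you flag about ruling out partial evaluation of a class is handled in the paper only implicitly, by appeal to the contiguity property of Theorem~\ref{thm:contiguity} together with the observation that a prefix of $1$'s in a class determines no term's value.
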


\begin{proof}
Under the uniformity assumption the ratios are the same for all variables. Hence, each sibling class will be evaluated as a single block and tested in an arbitrary order until either a variable evaluates to 0 or a term evaluates to 1, or until the sibling class is exhausted. Since we will evaluate each sibling class together, we can view each class as a single variable. Then we have a $k$-term DNF defined over $2^k$ variables. Let $V$ be the set of the new variables. For each $v \in V$, let $v_\ell$ denote the number of ``real'' variables in $v$.

 We can then find the optimal strategy using a dynamic programming method as before. The first two rules are as in the previous program. We will modify the third rule as follows:

For any partial assignment $\alpha$ that does not meet the first two conditions, then
\[|f_\alpha| = \min_{v \in V} 1 + \sum_{i = 1}^{v_\ell -1} \left[ \left(\frac{1}{2}\right)^i \times |f_{\alpha \cup v \gets 0}| \right] + \left(\frac{1}{2}\right)^{v_\ell+1} \times |f_{\alpha \cup v \gets 1}| + \left(\frac{1}{2}\right)^{v_\ell+1} \times |f_{\alpha \cup v \gets 0}|. \]
which follows directly from the unit costs and uniform probabilities.

The size of the table will be only $3^t$; hence we can determine the optimal testing strategy over the sibling classes in time $O(2^{2^k})$.
\end{proof}

%

\section{Expected certificate cost and optimal expected evaluation cost}
\label{sec:gapcertificate}

Some of the approximation bounds discussed in this paper are with
respect to the optimal expected cost of an evaluation
strategy, while others are in terms
of the expected certificate cost of the function, which lower
bounds the former quantity. It has been previously observed in \cite{dhkarxiv} that for arbitrary probabilities, there can be a gap of $\Omega(\log n)$ between the two measures.  In what follows, we prove that
even in the unit-cost, uniform distribution case,
the ratio between these two can be extremely large:
$\Omega(n^\epsilon)$ for any constant $\epsilon$ where $0 < \epsilon < 1$.
(Note that in the unit-cost case, both measures are at most $n$.)
We also show near-optimality of the CDNF
approximation bound of Kaplan et al. and give a gap
between two complexity measures related to decision trees
for Boolean functions.

\begin{theorem}
\label{gaptheorem}
Let $\beta$ be a constant such that $0 < \beta < 1$.
Let $f$ be a
read-once DNF formula on $n$ variables where each term
is of length $\beta \log_2 n$, and every variable appears in exactly one term.
Then $E_f[OPT]  = \Omega(n^\beta)$, and $E_f[CERT] = O(\log n)$.
\end{theorem}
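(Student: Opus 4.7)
The plan is to prove the two bounds separately. For $E_f[CERT] = O(\log n)$, I will argue that $f(x)=1$ with overwhelming probability over uniform $x$. There are $T = n/(\beta \log_2 n)$ pairwise disjoint terms, each satisfied independently with probability $2^{-\beta \log_2 n} = n^{-\beta}$, so
\[
\Pr[f(x)=0] = (1-n^{-\beta})^T \le \exp\!\left(-\tfrac{n^{1-\beta}}{\beta \log n}\right),
\]
which is superpolynomially small since $\beta<1$. When $f(x)=1$, any satisfied term gives a $1$-certificate of size $\beta \log_2 n$; when $f(x)=0$, the certificate costs at most $n$. Combining,
\[
E_f[CERT] \le \beta \log_2 n + n\cdot \Pr[f(x)=0] = O(\log n).
\]

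For $E_f[OPT] = \Omega(n^\beta)$, I will lower-bound the expected cost of any decision tree computing $f$. Fix such a tree and a uniformly random input $x$. For each term $t_j$ of $f$, define Boolean random variables $T_j(x)$ indicating that the algorithm tests at least one variable of $t_j$ on $x$, and $W_j(x)$ indicating that all $m := \beta \log_2 n$ variables of $t_j$ equal $1$. Let $K=\sum_j T_j$; the cost incurred is at least $K$, since each touched term contributes at least one test.

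The central step is to observe that $T_j$ and $W_j$ are independent. The reason is that $T_j$ is a deterministic function of the bits outside $t_j$: on two inputs $x,x'$ agreeing outside $t_j$, the algorithm's execution coincides up to the very first step at which it would query some variable of $t_j$ (if such a step occurs at all), because all prior queries return the same outcomes on $x$ and $x'$. Hence $T_j(x)=T_j(x')$, so $T_j$ depends only on variables outside $t_j$ while $W_j$ depends only on variables inside $t_j$; the product distribution makes them independent. A union bound combined with this independence gives
\[
\Pr\!\left[\exists j:\, T_j = W_j = 1\right] \le \sum_j \Pr[T_j=1]\,\Pr[W_j=1] = n^{-\beta}\cdot E[K].
\]

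To conclude, observe that whenever the algorithm outputs $1$ on $x$, it must have produced a $1$-certificate; for a read-once monotone DNF this forces it to have fully tested some term and found it to be all ones, so there exists $j$ with $T_j = W_j = 1$. Therefore $\Pr[\exists j: T_j=W_j=1] \ge \Pr[f(x)=1] = 1-o(1)$. Combined with the previous display this yields $E[K] \ge (1-o(1))n^\beta$, and since the cost is at least $K$, we conclude $E_f[OPT]=\Omega(n^\beta)$. The one delicate point is the independence $T_j \perp W_j$: an adaptive algorithm could conceivably try to use term $t_j$'s bits to decide whether to touch $t_j$, but this is ruled out precisely because the first query into $t_j$ must be chosen before seeing any of $t_j$'s bits.
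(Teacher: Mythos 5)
Your proof is correct, and the interesting half of it takes a genuinely different route from the paper. The upper bound on $E_f[CERT]$ is essentially the paper's computation: the paper uses the exact $0$-certificate cost $n/(\beta\log_2 n)$ where you use the cruder bound $n$, but the superpolynomially small $\Pr[f=0]$ absorbs either. For the lower bound on $E_f[OPT]$, the paper instead invokes the known structural theorem (from Greiner et al.\ and Kaplan et al.) that an optimal strategy for a read-once formula evaluates one term at a time; it then notes that with probability $(1-n^{-\beta})^{n^{\beta}} \ge e^{-2}$ the first $n^{\beta}$ terms all evaluate to $0$, forcing at least $n^{\beta}$ tests. Your argument dispenses with that external machinery and lower-bounds the cost of an \emph{arbitrary} correct decision tree. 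The two pillars of your argument are sound: the event that the tree queries some variable of $t_j$ is indeed a function of the bits outside $t_j$ (the first query into $t_j$ is chosen before any bit of $t_j$ is revealed), so it is independent of the all-ones event for $t_j$ under the product distribution; and a correct tree that outputs $1$ must have its tested bits form a $1$-certificate, which for a monotone DNF forces some term to be fully tested and satisfied. Combining the union bound with $\Pr[f=1]=1-o(1)$ gives $E[K]\ge(1-o(1))n^{\beta}$, and disjointness of the terms makes the test count at least $K$. What your approach buys is a self-contained proof with a slightly sharper constant ($(1-o(1))n^{\beta}$ versus $e^{-2}n^{\beta}$), and an argument that is not tied to the special structure of optimal strategies for read-once formulas; the paper's route is shorter only because it leans on the cited characterization.
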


\begin{proof}
Let $\log$ designate the base 2 log.
Let $k = n/(\beta \log n)$ and $m = \beta \log n$.
So $f$ is a read-once $k$-term $m$-DNF formula.

We begin by showing a lower bound on $E_f[OPT]$.
The probability that a term is equal to 1 is $1/2^m = 1/n^\beta$.
An optimal strategy for evaluating read-once formulas is known.
It works by evaluating each term in decreasing order of the term's
optimal expected evaluation cost, until either a term evaluates
to 1, or all terms have evaluated to 0~\cite{kaplanMansour-Stoc05,Greiner06}.
Since we are considering the unit-cost, uniform distribution,
case, all terms are symmetric, and terms can be evaluated in
arbitrary order.
The probability that this
strategy for evaluating $f$ evaluates at least $n^\beta$ terms,
and all evaluate to 0, is
$$R=(1-\frac{1}{n^\beta})^{(n^\beta)}$$

We first show that $R \geq e^{-2}$ for $n > 4$.
We use the standard inequality that says that for all $x$, $1 + x \leq e^x.$
From this inequality, it follows that
$$1 + \frac{2}{n^\beta} \leq e^{\frac{2}{n^\beta}} \mbox{ and hence } \frac{1}{1 + \frac{2}{n^\beta}} \geq e^{-\frac{2}{n^\beta}}.$$

It is easy to show using simple algebra that
$$1-\frac{1}{n^\beta} \geq \frac{1}{1 + \frac{2}{n^\beta}} \mbox{ for } n^\beta  \geq 2 \mbox{ and so } 1-\frac{1}{n^c} \geq e^{-\frac{2}{n^\beta}}. $$

Raising both sides to the power $n^\beta$ yields the desired result
that $R \geq e^{-2}$ for $n^\beta > 2$.

Since the probability that this optimal strategy evaluates at
least $n^\beta$ terms
is at least $e^{-2}$, and each evaluation costs at least 1, it follows
that $E_f[OPT] = \Omega(n^\beta)$.

We now upper bound $E_f[CERT]$.
By definition,
$$E_f[CERT] = \frac{n}{\beta \log n} (Prob[f=0]) + {\beta \log n}(Prob[f=1]).$$
Again using the inequality $1 + x \leq e^x$, we get that
$Prob[f=0]= (1-\frac{1}{n^\beta})^{\frac{n}{\beta \log n}} \leq e^{-\frac{n^{(1-\beta)}}{\beta \log n}}$.
Since
$\frac{n}{\beta \log n}e^{-\frac{n^{(1-\beta)}}{\beta \log n}}$
approaches 0 as $n$ approaches infinity, it is less than 1 for
large enough $n$.
It follows that
$E_f[CERT] \leq 1 + {\beta \log n} (Prob[f=1])$
for sufficiently large $n$, and since $Prob[f=1] \leq 1$,
$E_f[CERT] = O(\log n)$.
\end{proof}

For any constant $\epsilon$, where $0 < \epsilon < 1$,
there is a constant $d$ where $0 < \epsilon < d < 1$.
For large enough $n$, $n^\epsilon \log n < n^d$.
We thus have the following corollary.

\begin{corollary}
There exists a Boolean function $f$ such that
for any constant $\epsilon$, where $0 < \epsilon < 1$,\\
$E_f[OPT]/E_f[CERT] = \Omega(n^\epsilon)$.
\end{corollary}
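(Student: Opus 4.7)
The plan is to extract the corollary as a direct consequence of Theorem~\ref{gaptheorem} combined with the paragraph immediately preceding the statement, which already does the bookkeeping that turns the ratio $n^\beta/\log n$ into the clean bound $n^\epsilon$. There is no genuine obstacle here; the only thing to get right is the order of the quantifiers. The statement should be read as: for each constant $\epsilon \in (0,1)$, one exhibits a family of Boolean functions $f = f_n$ whose ratio $E_f[\mathrm{OPT}]/E_f[\mathrm{CERT}]$ is $\Omega(n^\epsilon)$. The same $f$ cannot serve all $\epsilon < 1$ simultaneously, because the parameter $\beta$ in the construction must exceed $\epsilon$, so the proof begins by letting the adversary fix $\epsilon$ and then choosing a construction tailored to it.

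First, I would invoke the observation that for any $\epsilon \in (0,1)$ there is a constant $d$ with $\epsilon < d < 1$ and, for all sufficiently large $n$, $n^\epsilon \log n < n^d$. Fix such a $d$. Next, apply Theorem~\ref{gaptheorem} with the choice $\beta := d$: this produces, for each $n$, a read-once DNF formula $f_n$ on $n$ variables consisting of $n/(d \log n)$ disjoint terms of length $d \log n$, for which the theorem guarantees $E_{f_n}[\mathrm{OPT}] = \Omega(n^d)$ and $E_{f_n}[\mathrm{CERT}] = O(\log n)$.

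Dividing these two bounds gives $E_{f_n}[\mathrm{OPT}]/E_{f_n}[\mathrm{CERT}] = \Omega(n^d/\log n)$. Rearranging the inequality $n^\epsilon \log n < n^d$ yields $n^d/\log n > n^\epsilon$ for all sufficiently large $n$, so the ratio is $\Omega(n^\epsilon)$, as claimed. The function family $\{f_n\}$ depends on $\epsilon$ only through the choice of $d$, and since the construction in Theorem~\ref{gaptheorem} is a read-once DNF with very short terms, $f_n$ indeed has a very compact representation, reinforcing the message that the large gap is a genuinely information-theoretic phenomenon rather than an artifact of a pathological encoding.
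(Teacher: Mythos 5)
Your proof is correct and follows the paper's own argument exactly: fix $d$ with $\epsilon < d < 1$, apply Theorem~\ref{gaptheorem} with $\beta = d$, and use $n^\epsilon \log n < n^d$ for large $n$ to absorb the $\log n$ factor. Your remark about the quantifier order is also apt --- the construction does depend on $\epsilon$ through the choice of $d$, which is how the paper's preceding paragraph implicitly reads the statement as well.
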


Theorem~\ref{gaptheorem} and the above corollary
can also be interpreted as results on
average-case analogues of depth-complexity and certificate-complexity.
The depth complexity of a Boolean
function $f$ is the minimum, over all decision trees for $f$,
of the depth of that tree.  Note that the depth of the tree is the
worst-case (i.e., maximum), over all $2^n$ assignments $x$
to the variables of that function, of the number of tests (decisions)
induced by the tree on assignment $x$.
The certificate complexity of a Boolean function is
the worst-case (i.e., maximum), over all $2^n$ input assignments $x$,
of the smallest 0-certificate or 1-certificate of $f$ that
is contained in $x$.
The average depth-complexity and average certificate-complexity of a Boolean function can be defined analogously, with
worst-case replaced by average case.
Thus the average depth-complexity is equal to $E_f[OPT]$,
and the average certificate-complexity is equal to $E_f[CERT]$.

We can also use Theorem~\ref{gaptheorem} to show near-optimality of
the $O(\log kd)$ approximation bound achieved by Kaplan et al. for
monotone CDNF evaluation, with respect to $E_f[CERT]$,
the expected certificate cost
under unit costs and the uniform distribution.
The function computed by the formula in Theorem~\ref{gaptheorem} has a CNF formula
with
$(\beta \log_2 n)^{n/(\beta \log_2 n)}$ clauses.
Thus in this case $O(\log kd)$ is $O(\frac{n \log \log n}{\log n})$,
which is $O(n)$.
The strategy computed by any approximation algorithm for this problem
cannot do better than the optimal strategy,
so its expected cost must be at least $E_f[OPT]/E_f[CERT]$
times larger than $E_f[OPT]$.
It follows that the approximation $O(\log kd)$ bound of Kaplan et al.
has a matching lower bound of $\Omega((\log kd)^{\epsilon})$
(for $0 < \epsilon < 1$), with respect to the expected
certificate cost.

We do not know, however, whether it
is possible for a {\em polynomial-time} algorithm
to achieve an approximation factor much better than $O(\log kd)$
with respect to the expected cost of the optimal strategy,
$E_f[OPT]$.  We have no non-trivial lower bound for the approximation
algorithm in this case; clearly such a lower bound would have to depend
on complexity theoretic assumptions.

%

\section{ Acknowledgments}
Sarah R.~Allen was partially supported by an NSF Graduate Research Fellowship under Grant 0946825 and
by NSF grant CCF-1116594.
Lisa Hellerstein was partially supported by NSF Grants
1217968 and 0917153. Devorah Kletenik was partially
supported by NSF Grant 0917153. Tongu\c{c} \"Unl\"uyurt was partially supported by TUBITAK 2219 programme. Part of this research was performed while Tongu\c{c} \"Unl\"uyurt was visiting faculty at Polytechnic Institute of NYU and Sarah Allen was a student there.

\bibliographystyle{plain}
\bibliography{dnf}

\begin{thebibliography}{10}

\bibitem{BuhrmanWolf}
H.~Buhrman and R.~De Wolf.
\newblock Complexity measures and decision tree complexity: A survey.
\newblock {\em Theoretical Computer Science}, 288:2002, 1999.

\bibitem{Chvatalsetcoverharmonic}
V.~Chv{\'a}tal.
\newblock A greedy heuristic for the set-covering problem.
\newblock {\em Mathematics of Operations Research}, 4(3):233--235, 1979.

\bibitem{heuristicLeastCostCox}
L.~Cox, Y.~Qiu, and W.~Kuehner.
\newblock Heuristic least-cost computation of discrete classification functions
  with uncertain argument values.
\newblock {\em Annals of Operations Research}, 21:1--29, 1989.

\bibitem{conf/icde/DeshpandeH08}
A.~Deshpande and L.~Hellerstein.
\newblock Flow algorithms for parallel query optimization.
\newblock In {\em ICDE}, 2008.

\bibitem{dhkarxiv}
A.~Deshpande, L.~Hellerstein, and D.~Kletenik.
\newblock Approximation algorithms for stochastic boolean function evaluation
  and stochastic submodular set cover.
\newblock 2013.
\newblock http://arxiv.org/abs/1303.0726.

\bibitem{feige}
U.~Feige.
\newblock A threshold of ln n for approximating set cover.
\newblock {\em Journal of the ACM}, 45:314--318, 1998.

\bibitem{golovinKrause}
D.~Golovin and A.~Krause.
\newblock Adaptive submodularity: Theory and applications in active learning
  and stochastic optimization.
\newblock {\em {JAIR}}, 2011.

\bibitem{Greiner06}
R.~Greiner, R.~Hayward, M.~Jankowska, and M.~Molloy.
\newblock Finding optimal satisficing strategies for and-or trees.
\newblock {\em Artif. Intell.}, 170(1):19--58, 2006.

\bibitem{guijarroRaghavantruthtable}
D.~Guijarro, V.~Lav\'{\i}n, and V.~Raghavan.
\newblock Exact learning when irrelevant variables abound.
\newblock In {\em EuroCOLT}, 1999.

\bibitem{hochbaum}
D.~S. Hochbaum.
\newblock Approximation algorithms for the set covering and vertex cover
  problems.
\newblock {\em SIAM J. Comput.}, 11(3):555--556, 1982.

\bibitem{Ibaraki1984}
T.~Ibaraki and T.~Kameda.
\newblock On the optimal nesting order for computing n-relational joins.
\newblock {\em ACM Trans. Database Syst.}, 9(3):482--502, 1984.

\bibitem{kaplanMansour-Stoc05}
H.~Kaplan, E.~Kushilevitz, and Y.~Mansour.
\newblock Learning with attribute costs.
\newblock In {\em STOC}, pages 356--365, 2005.

\bibitem{KrishnamurthyBZ86}
R.~Krishnamurthy, H.~Boral, and C.~Zaniolo.
\newblock Optimization of nonrecursive queries.
\newblock In {\em VLDB}, 1986.

\bibitem{SrivastavaMWM06}
U.~Srivastava, K.~Munagala, J.~Widom, and R.~Motwani.
\newblock Query optimization over web services.
\newblock In {\em VLDB}, 2006.

\bibitem{unluyurtReview}
T.~{\"U}nl{\"u}yurt.
\newblock Sequential testing of complex systems: a review.
\newblock {\em Discrete Applied Mathematics}, 142(1-3):189--205, 2004.

\end{thebibliography}

\end{document}